\setlist{noitemsep,leftmargin=5.5mm}
\title{A lower bound on the order of the largest induced linear forest in
  triangle-free planar graphs}
\author[a]{François Dross}
\author[a]{Mickael Montassier}
\author[b]{Alexandre Pinlou}
\affil[a]{{\small Université
    de Montpellier, LIRMM}} \affil[b]{{\small Université Paul-Valéry Montpellier 3,
    LIRMM\medskip}} \affil[ ]{{\small 161 rue Ada, 34095 Montpellier
    Cedex 5, France}} \affil[
]{\href{mailto:francois.dross@lirmm.fr,mickael.montassier@lirmm.fr,alexandre.pinlou@lirmm.fr}{\small{\{francois.dross,mickael.montassier,alexandre.pinlou\}@lirmm.fr}}}
\date{}
\begin{document}

\maketitle
\newtheorem{theo}{Theorem}
\newtheorem{cor}[theo]{Corollary}
\newtheorem{lemm}[theo]{Lemma}
\newtheorem{prop}[theo]{Property}
\newtheorem{obs}[theo]{Observation}
\newtheorem{conj}[theo]{Conjecture}
\newtheorem{claim}[theo]{Claim}

\begin{abstract}
We prove that every triangle-free planar graph of order $n$ and size $m$ has an induced linear forest with at least $\frac{9n - 2m}{11}$ vertices, and thus at least $\frac{5n + 8}{11}$ vertices. Furthermore, we show that there are triangle-free planar graphs on $n$ vertices whose largest induced linear forest has order $\lceil \frac{n}{2} \rceil + 1$.
\end{abstract}

\section{Introduction}
In this article, we only consider simple finite graphs. All considered planar graphs are supposed to be embedded in the plane.

We look into the problem of finding large induced forests in planar graphs. Albertson and Berman~\cite{AlbertsonBerman} conjectured that every planar graph admits an induced forest on at least half of its vertices. This conjecture, if true, would be tight, as shown by the disjoint union of copies of the complete graph on four vertices. One of the motivations of this conjecture is that it would imply that every planar graph admits an independent set on at least one fourth of its vertices, the only known proof of which relies on the Four Colour Theorem. However, this conjecture appears to be very hard to prove. The best known result for planar graphs is that every planar graph admits an induced forest on at least two fifths of its vertices. This is a consequence of the theorem of $5$-acyclic colourability of planar graphs of Borodin~\cite{Borodin}. 

The conjecture of Albertson and Berman has been proved and strengthened for smaller classes of graphs. For example, Hosono~\cite{Hosono} showed that every outerplanar graph admits an induced forest on at least two thirds of its vertices, which is tight. Akiyama and Watanabe~\cite{Akiyama}, and Albertson and Rhaas~\cite{Albertson} independently conjectured that every bipartite planar graph admits an induced forest on at least five eighths of its vertices, which is tight. For triangle-free planar graphs (and thus in particular for bipartite planar graphs), it is proven that every triangle-free planar graph of order $n$ and size $m$ admits an induced forest of order at least $(38n-7m)/44$, and thus at least $(6n+7)/11$~\cite{girth4-5}.
%For planar graphs with girth at least five, Kowalik et al.~\cite{Kowalik} conjectured that every planar graph with $n$ vertices, $m$ edges and girth at least $5$ admits an induced forest of order at least $7n/10$, and we proved that every planar graph with $n$ vertices, $m$ edges and girth at least $5$ admits an induced forest at least $(n - 5m)/23$ vertices, and thus at least $(44n+50)/69$ vertices~\cite{girth4-5}.
%The authors conjectured that every planar graph of girth at least $g$, order $n$ and size $m$ admits an induced forest of order at least $n - \frac{m}{g}$ and proved that every such graph admits an induced forest of order at least $n - \frac{4m}{3g}$~\cite{girthg}.

An interesting variant of this problem is to look for large induced forests with bounded maximum degree. A forest with maximum degree $2$ is called a \emph{linear forest}. 

The problem for linear forests was solved for outerplanar graphs by Pelsmajer~\cite{Pelsmajer}: every outerplanar graph admits an induced linear forest on at least four sevenths of its vertices, and this is tight. More generally, the problem for a forest of maximum degree at most $d$, with $d \ge 2$, was solved for graphs with treewidth at most $k$ for all $k$ by Chappel and Pelsmajer~\cite{chappell2013maximum}. Their result in particular extends the results of Hosono and Pelsmayer on outerplanar graphs to series-parallel graphs, and generalises it to graphs of bounded treewidth.

In this paper we focus on linear forests. Chappel conjectured that every planar graph admits an induced linear forest on at least four ninths of its vertices. Again, this would be tight if true. Poh~\cite{Poh} proved that every planar graph can have its vertices be partitioned into three sets, each inducing a linear forest, and thus that every planar graph admits an induced linear forest on at least one third of its vertices. In this paper, we prove and strengthen Chappel's conjecture for a smaller class of graphs, the class of triangle-free planar graphs. Observe that planar graphs with arbitrarily large girth can have an arbitrarily large treewidth, so in this setting the best result known to date is that every triangle-free planar graph admits an induced linear forest on at least one third of its vertices.

We prove the following theorem:

\begin{theo}\label{main}
Every triangle-free planar graph of order $n$ and size $m$ admits an induced linear forest of order at least $\frac{9n - 2m}{11}$.
\end{theo}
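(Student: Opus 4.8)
The plan is to argue by contradiction through a minimal counterexample combined with discharging, the coefficients $9$, $2$, $11$ being tuned to Euler's formula for girth-$4$ planar graphs. Let $G$ be a counterexample minimizing $|V(G)|$ and, subject to that, $|E(G)|$, and write $\rho(H) = 9|V(H)| - 2|E(H)|$ and $\mu(H)$ for the order of a largest induced linear forest of $H$, so that $\mu(H) \geq \rho(H)/11$ holds for every graph strictly smaller than $G$. Since a disjoint union of linear forests is a linear forest and $\rho$ is additive over connected components, $G$ may be assumed connected. The engine of the reductions is the following bookkeeping: if $D \subseteq V(G)$ and $G' = G - D$, then $\rho(G') = \rho(G) - 9|D| + 2 e(D)$, where $e(D)$ is the number of edges meeting $D$. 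Hence if a suitably chosen maximum induced linear forest of $G'$ can always be extended to one of $G$ by adding back at least $k$ vertices of $D$ with $11 k \geq 9|D| - 2 e(D)$, then $\mu(G) \geq \mu(G') + k \geq \rho(G)/11$, contradicting the choice of $G$. I call such a set $D$ \emph{reducible}.

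The bulk of the work is to compile a list of reducible configurations rich enough to force the minimal counterexample to be impossibly dense locally. Here the essential difficulty, and the point where linear forests depart sharply from ordinary forests, is that a vertex can be added to a linear forest $L'$ only if none of its neighbours already lying in $L'$ has degree $2$ there. Thus even a single pendant vertex $v$ with neighbour $u$ is not obviously reducible: although $11\cdot 1 \geq 9\cdot 1 - 2\cdot 1$ would permit $D = \{v\}$, $k = 1$, the extension fails precisely when $u$ has degree $2$ in the forest chosen for $G - v$. To circumvent this I would strengthen the induction hypothesis so that it also yields a maximum induced linear forest in which prescribed boundary vertices have degree at most $1$ (or are omitted), creating the room needed to reattach deleted vertices. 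With such control, pendant vertices, long threads of degree-$2$ vertices, and small clusters of adjacent degree-$2$ and degree-$3$ vertices all become reducible via the inequality above, typically by deleting a short subconfiguration and re-adding all but a controlled number of its vertices.

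With the reducible configurations excluded, I would run the discharging: assign to each vertex the charge $\deg(v) - 4$ and to each face the charge $\ell(f) - 4$, so that by Euler's formula together with $\sum_v \deg(v) = \sum_f \ell(f) = 2m$ the total charge equals $-8$. Because $G$ is triangle-free, every face has length at least $4$ and hence nonnegative charge, so only vertices of degree $2$ and $3$ carry negative charge. I would then set up rules sending charge from incident faces, which the structural restrictions force to be long enough, down to these low-degree vertices, and use the forbidden configurations to show every vertex and every face ends with nonnegative charge. This contradicts the total being $-8$, completing the proof; the bound $\frac{5n+8}{11}$ then follows by substituting $m \leq 2n-4$.

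The main obstacle is exactly the reducibility analysis under the degree-$2$ ceiling: unlike for induced forests, reattaching a deleted vertex can be blocked by the forest having already used up a neighbour's two permitted incidences, so each configuration must be examined together with every way an optimal forest of the reduced graph can meet its boundary. Calibrating the strengthened hypothesis so that it is strong enough to unblock these extensions yet still provable by the same induction is the delicate core of the argument; by comparison the discharging phase should be comparatively routine once a sufficiently rich family of reducible configurations is in hand.
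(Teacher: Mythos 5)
Your overall architecture---minimal counterexample, vertex-deletion reductions calibrated against $9|D|-2e(D)$, and a final Euler-formula contradiction---is the same as the paper's, and you correctly isolate the crux: a deleted vertex cannot be re-attached when one of its neighbours already has degree $2$ in the forest chosen for the reduced graph. But the remedy you propose, a strengthened induction hypothesis producing a \emph{maximum} induced linear forest in which prescribed boundary vertices have degree at most $1$ or are omitted, is left entirely uncalibrated, and in its stated form it is false: in the path $v_1v_2\cdots v_n$ the unique maximum induced linear forest is the whole path, in which a prescribed vertex $v_2$ has degree $2$, while every induced linear forest omitting $v_2$ or meeting it in degree at most $1$ has at most $n-1$ vertices. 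Any such boundary control therefore carries a quantitative penalty, and the real work is to decide what penalty suffices and to prove the penalized statements inductively. That is what the paper's chain machinery does, and by a different mechanism from the one you suggest: it never constrains which forest the induction returns, but instead constrains the reduced graph, deleting a set $M \supseteq N(L)\setminus\{v\}$ so that the lone boundary vertex $v$ has degree at most $1$ in $G-M-L$ itself (Observation~\ref{o2}), whence \emph{any} optimal forest of the reduced graph extends; the cost of manufacturing such configurations is tracked by the correction terms $\frac{1}{2}$, $3$, $1$ in Lemmas~\ref{schain}, \ref{dchain} and~\ref{sepdchain}, which are proved by a second induction on a chain maximizing $|C|$. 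Nothing in your sketch plays the role of these lemmas.

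Your judgement that the discharging phase ``should be comparatively routine'' also inverts where the difficulty lies. With charges $d(v)-4$ and $\ell(f)-4$, every face of a triangle-free planar graph already has nonnegative charge, so the total of $-8$ must be absorbed entirely by rules feeding $3$-vertices from faces of length at least $5$; but a $3$-regular quadrangulation such as the cube has all of its charge $-8$ on vertices and zero charge on every face, so no system of rules can possibly succeed until configurations such as two adjacent $3$-vertices and a $3$-vertex on a $4$-face are proved reducible. You do assert reducibility of ``clusters of degree-$2$ and degree-$3$ vertices,'' but without proof, and these assertions are exactly Lemmas~\ref{3-3}, \ref{4433} and~\ref{4443} of the paper, whose proofs are long case analyses consuming the chain lemmas (the $3$-regular case alone requires reducing $4$-cycles and $5$-cycles and then playing girth at least $6$ against Euler's formula). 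In short, the unproved reducibility claims are not a finishing detail of your plan; together with the missing penalty bookkeeping above, they are the substance of the proof.
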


Thanks to Euler's formula, we can derive the following corollary:

\begin{cor}
Every triangle-free planar graph of order $n$ admits an induced linear forest of order at least $\frac{5n + 8}{11}$.
\end{cor}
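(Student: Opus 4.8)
The plan is to obtain the corollary directly from Theorem~\ref{main} by controlling the size $m$ in terms of the order $n$. The key observation is that the guaranteed order $\frac{9n-2m}{11}$ furnished by Theorem~\ref{main} is a strictly decreasing function of $m$; hence the weakest (and so worst-case) guarantee occurs when the graph has as many edges as possible. It therefore suffices to produce a good upper bound on $m$ that depends on $n$ alone, and the natural tool for this is Euler's formula applied to the planar embedding.

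Concretely, I would first establish the classical edge bound $m \le 2n - 4$ valid for every simple triangle-free planar graph on $n \ge 3$ vertices. Since the graph is simple and contains no triangle, the boundary walk of each face has length at least $4$; as every edge lies on the boundary of at most two faces, summing face lengths gives $2m \ge 4f$, i.e.\ $f \le m/2$. Plugging this into Euler's formula $n - m + f = 2$ for a connected graph yields $n - m + m/2 \ge 2$, that is $m \le 2n - 4$. The disconnected case reduces to this one, since joining two distinct components by an edge creates no cycle (hence no triangle) and only increases $m$. With the bound in hand, monotonicity and Theorem~\ref{main} give
\[
\frac{9n - 2m}{11} \ge \frac{9n - 2(2n-4)}{11} = \frac{9n - 4n + 8}{11} = \frac{5n + 8}{11},
\]
which is exactly the asserted induced linear forest of order at least $\frac{5n+8}{11}$.

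I do not expect a genuine obstacle here, as the entire combinatorial content is already carried by Theorem~\ref{main}; the only points requiring care are purely routine. One must note that the Euler bound $m \le 2n - 4$ holds only for $n \ge 3$, so the finitely many small graphs with $n \le 2$ should be dispatched by direct inspection, where the whole vertex set trivially induces a linear forest and the stated inequality is checked by hand. Thus the proof of the corollary amounts to a short application of Euler's formula together with the monotonicity of the theorem's bound in $m$.
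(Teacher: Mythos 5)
Your proposal is correct and takes exactly the route the paper intends: the paper offers no proof beyond the phrase ``Thanks to Euler's formula,'' and the intended argument is precisely yours --- the bound $m \le 2n-4$ for simple triangle-free planar graphs combined with the monotonicity of $\frac{9n-2m}{11}$ in $m$. One caveat on your claim that small cases are ``checked by hand'': for $n=1$ the asserted bound is $\frac{13}{11} > 1$, so the statement literally fails there; this is an edge-case defect of the corollary as stated in the paper itself (and likewise for $n=0$), not of your method for $n \ge 2$.
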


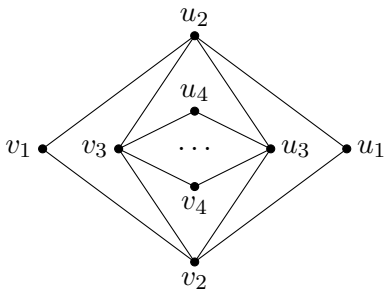
\begin{figure}
\begin{center}
\begin{tikzpicture}
\coordinate (u1) at (2,0) {};
\coordinate (v1) at (-2,0) {};
\coordinate (u2) at (0,1.5) {};
\coordinate (v2) at (0,-1.5) {};
\coordinate (u3) at (1,0) {};
\coordinate (v3) at (-1,0) {};
\coordinate (u4) at (0,0.5) {};
\coordinate (v4) at (0,-0.5) {};

\draw (u1) node [right] {$u_1$};
\draw (v1) node [left] {$v_1$};
\draw (u2) node [above] {$u_2$};
\draw (v2) node [below] {$v_2$};
\draw (u3) node [right] {$u_3$};
\draw (v3) node [left] {$v_3$};
\draw (u4) node [above] {$u_4$};
\draw (v4) node [below] {$v_4$};
\draw (0,0) node {$\dots$};

\draw (u1) -- (u2);
\draw (u1) -- (v2);
\draw (v1) -- (u2);
\draw (v1) -- (v2);
\draw (u2) -- (u3);
\draw (u2) -- (v3);
\draw (v2) -- (u3);
\draw (v2) -- (v3);
\draw (u3) -- (u4);
\draw (u3) -- (v4);
\draw (v3) -- (u4);
\draw (v3) -- (v4);

\draw [fill=black] (u1) circle (1.5pt) ;
\draw [fill=black] (v1) circle (1.5pt) ;
\draw [fill=black] (u2) circle (1.5pt) ;
\draw [fill=black] (v2) circle (1.5pt) ;
\draw [fill=black] (u3) circle (1.5pt) ;
\draw [fill=black] (v3) circle (1.5pt) ;
\draw [fill=black] (u4) circle (1.5pt) ;
\draw [fill=black] (v4) circle (1.5pt) ;

\end{tikzpicture}
\end{center}
\caption{The graph $G_k$ of Claim~\ref{lb}.}\label{figlb}
\end{figure}

For a graph $G = (V,E)$, and $S \subset V$, let $G[S]$ denote the subgraph of $G$ induced by $S$.
Note that we cannot hope to get a better lower bound than $\frac{n}{2} + 1$. Indeed, we prove the following claim:

\begin{claim}\label{lb}
For all integer $n \ge 2$, there exists a triangle-free planar graph of order $n$ whose largest induced linear forest has order $\lceil \frac{n}{2} \rceil + 1$.
\end{claim}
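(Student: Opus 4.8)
The plan is to analyze the graph $G_k$ depicted in Figure~\ref{figlb}, which consists of a "chain" of $k$ copies of $K_{2,2}$ glued along shared vertices. Concretely, for each $i$ there is a pair $\{u_i, v_i\}$, and consecutive pairs $\{u_i,v_i\}$ and $\{u_{i+1},v_{i+1}\}$ induce a complete bipartite graph $K_{2,2}$. First I would verify that $G_k$ is triangle-free and planar: it is bipartite (the vertices $\{u_i,v_i\}$ with $i$ odd versus $i$ even form the bipartition), hence triangle-free, and the drawing in Figure~\ref{figlb} exhibits a planar embedding. If $G_k$ has $2k$ vertices, this handles the case of even $n$; for odd $n$ I would add a single isolated vertex (or a pendant vertex) to obtain the graph on $n = 2k+1$ vertices, which increases the size of any maximum induced linear forest by exactly one and adjusts the bound to match $\lceil n/2 \rceil + 1$.

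\emph{Upper bound (no linear forest is larger).} The heart of the argument is to show that any induced linear forest $F$ in $G_k$ uses at most $k+1$ vertices. The key local observation is that within each $K_{2,2}$ block on $\{u_i,v_i,u_{i+1},v_{i+1}\}$, one cannot take all four vertices into an induced linear forest, since $K_{2,2}$ contains a $4$-cycle. More usefully, I would track how many of the two vertices $\{u_i,v_i\}$ in each "column" are selected. If both $u_i$ and $v_i$ are chosen, then in the induced subgraph each of them is adjacent to every chosen vertex among $\{u_{i-1},v_{i-1}\}$ and among $\{u_{i+1},v_{i+1}\}$; keeping the maximum degree at most $2$ and avoiding cycles then forces severe restrictions on the neighbouring columns. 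I would set up a discharging-style or direct counting argument on the number of selected vertices per column, showing that a "full" column (both vertices chosen) can only be afforded at the two ends of the chain, and that interior full columns would either create a vertex of degree $3$ or close a $4$-cycle. Summing these constraints across the $k$ columns yields the cap of $k+1$.

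\emph{Matching construction.} Conversely I would exhibit an induced linear forest of order exactly $k+1$, for instance by taking both endpoints $u_1,v_1$ together with exactly one vertex from each subsequent column, chosen so that the selected set induces a single path (or a disjoint union of paths) with maximum degree $2$. Checking that this selection is indeed induced-acyclic and has maximum degree at most $2$ is a routine verification on the explicit vertex set, and it shows the upper bound is attained.

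The main obstacle I anticipate is the upper bound argument: one must rule out \emph{all} ways of selecting vertices, and the interaction between adjacent columns can be subtle because a vertex's degree in $F$ depends on choices two columns away through the shared cycle structure. The cleanest route is probably an inductive or transfer-matrix style argument that records, for each prefix of the chain, the relevant "state" of the last column (how many of its vertices are selected and their current degree within $F$), and bounds the number of additional vertices any extension can contribute; I would expect to show that each column after the first contributes at most one net vertex to a valid $F$, with the extra $+1$ coming from being able to take both vertices in the initial column.
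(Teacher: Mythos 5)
Your choice of graph, the treatment of odd $n$ by adding an isolated vertex, and the verification that $G_k$ is bipartite and planar all match the paper, and your closing suggestion of an induction tracking the ``state'' of the last column is indeed the right strategy (it is essentially what the paper does). However, the two concrete arguments you actually commit to both fail. First, your matching construction is not an induced linear forest: if you take $u_1,v_1$ and one vertex from each column $2,\dots,k$, then the vertex chosen in column $2$ is adjacent to $u_1$, to $v_1$, and to the vertex chosen in column $3$, so it has degree $3$ no matter which vertices you pick. A correct construction takes \emph{both} vertices of every other column (e.g.\ $\{u_1,v_1,u_3,v_3,u_5,v_5,\dots\}$, which is an independent set), or, as in the paper, adds $\{u_k,v_k\}$ to a maximum induced linear forest of $G_{k-2}$.

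Second, the structural claim at the heart of your upper bound --- that a ``full'' column (both vertices chosen) can only occur at the two ends of the chain --- is false: in $G_5$ the set $\{u_1,v_1,u_3,v_3,u_5,v_5\}$ is a maximum induced linear forest in which the interior column $3$ is full. Likewise, your claim that ``each column after the first contributes at most one net vertex'' fails pointwise, since an interior full column contributes two. What is true is only an amortized statement: two adjacent columns cannot both be full, and a full interior column with chosen vertices on both sides creates a $4$-cycle, so each full interior column forces an adjacent column to be empty. Turning this into a count is exactly the missing induction: for an induced linear forest $F$ of $G_k$, either $|F\cap\{u_k,v_k\}|\le 1$ (bound by $1$ plus the maximum for $G_{k-1}$), or column $k$ is full and column $k-1$ is empty (bound via $G_{k-2}$), or column $k$ is full, column $k-1$ contains exactly one vertex $x$, and column $k-2$ must be empty because $x$ already has degree $2$ (bound via $G_{k-3}$); with base cases $k\le 3$ this yields $|F|\le k+1$. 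As written, your proposal asserts the conclusion of this analysis but does not carry it out, and the shortcuts offered in its place are incorrect.
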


\begin{proof}
Let us build such a graph for $n = 2k$. For odd $n$, adding an isolated vertex to the graph of order $n-1$ yields the result.

Let $G_k$ be defined by $G_k = (\bigcup_{1 \le i \le k} \{u_i, v_i\}, \bigcup_{1 \le i \le k - 1}$ $ \{u_iu_{i+1}, u_iv_{i+1}, v_iu_{i+1}, v_iv_{i+1}\})$, as represented in Figure~\ref{figlb}. Let us prove by induction on $k \ge 1$ that the largest induced linear forest of $G_k$ has order $k + 1$.

\begin{itemize}
\item
For $k = 1$, $G_1$ is the graph with two vertices and no edge, and $G_1$ is its own largest induced linear forest, with order $2 = k+1$.

\item
For $k = 2$, $G_2$ is a cycle of length $4$, any three vertices of $G_2$ induce a linear forest of order $3 = k+1$, and $G_2$ is not a linear forest (thus it has no induced linear forest of order $4$).

\item
For $k = 3$, $\{u_1, v_1, u_3, v_3\}$ induces a linear forest in $G_3$, and it is easy to check that no five vertices of $G_3$ induce a linear forest.

\item
Suppose $k \ge 4$. By induction hypothesis, $G_{k-1}$, $G_{k-2}$, and $G_{k-3}$ have a largest induced forest of order $k$, $k-1$, and $k-2$ respectively. Adding $u_k$ and $v_k$ to any induced linear forest of $G_{k-2}$ leads to an induced linear forest of $G_k$, thus $G_k$ has an induced linear forest of order $k+1$. All that remains to prove is that $G_k$ has no induced linear forest of order at least $k+2$.

Let $F \subset V(G_k)$ be a set inducing a linear forest of $G_k$. Let us prove that $|F| \le k+1$. As $F \backslash \{u_k,v_k\}$ induces a linear forest in $G_{k-1}$, we have $|F \backslash \{u_k,v_k\}| \le k$. Similarly, $|F \backslash \{u_k,v_k,u_{k-1},v_{k-1}\}| \le k-1$ and $|F \backslash \{u_k,v_k,u_{k-1},v_{k-1},u_{k-2},v_{k-2}\}| \le k-2$. If $|\{u_k,v_k\} \cap F| \le 1$, then $|F| = |\{u_k,v_k\} \cap F| + |F \backslash \{u_k,v_k\}| \le k + 1$. Suppose now that $|\{u_k,v_k\} \cap F| > 1$, i.e. $\{u_k,v_k\} \subset F$. At most one of $u_{k-1}$ and $v_{k-1}$ is in $F$, otherwise $G_k[F]$  has a cycle. If $\{u_{k-1},v_{k-1}\} \cap F = \emptyset$, then $|F| = |\{u_k,v_k\} \cap F| + |\{u_{k-1},v_{k-1}\} \cap F| + |F \backslash \{u_k,v_k,u_{k-1},v_{k-1}\}| \le 2 + k - 1 = k+1$. Assume now that $\{u_{k-1},v_{k-1}\} \cap F \ne \emptyset$, w.l.o.g. $u_{k-1} \in F$. We have $\{u_{k-2},v_{k-2}\} \cap F = \emptyset$, otherwise $u_{k-1}$ has degree at least $3$ in $G_k[F]$. Hence, $|F| = |\{u_k,v_k\} \cap F| + |\{u_{k-1},v_{k-1}\} \cap F| + |\{u_{k-2},v_{k-2}\} \cap F| + |F \backslash \{u_k,v_k,u_{k-1},v_{k-1},u_{k-2},v_{k-2}\}| \le 2 + 1 + k - 2 = k+1$.
\end{itemize}
\vspace{-1cm}
\end{proof}

\section{Proof of Theorem~\ref{main}}
Consider a graph $G = (V,E)$. For a set $S \subset V$, let $G - S$ be the graph obtained from $G$ by removing the vertices of $S$ and all the edges incident to a vertex of $S$. If $x \in V$, then we denote $G - \{x\}$ by $G - x$. For a set $S$ of vertices such that $S \cap V = \emptyset$, let $G + S$ denote the graph obtained from $G$ by adding the vertices of $S$. If $x \notin V$, then we denote $G + \{x\}$ by $G + x$. For a set $F$ of pairs of vertices of $G$ such that $F \cap E = \emptyset$, let $G + F$ be the graph constructed from $G$ by adding the edges of $F$. If $e$ is a pair of vertices of $G$ and $e \notin E$, then we denote $G + \{e\}$ by $G + e$. If $x \in V$, then we denote the neighbourhood of $x$, that is the set of the vertices adjacent to $x$, by $N(x)$. For a set $S \subset V$, we denote the neighbourhood of $S$, that is the set of vertices in $V \backslash S$ that are adjacent to at least an element of $S$, by $N(S)$. We denote $|V|$ by $|G|$ and $|E|$ by $||G||$.

We call a vertex of degree $d$, at least $d$, and at most $d$, a \emph{$d$-vertex}, a \emph{$d^+$-vertex}, and a \emph{$d^-$-vertex} respectively. We call a cycle of length $l$ an \emph{$l$-cycle}, and by extension a face of length $l$ an \emph{$l$-face}.

Let ${\cal P}_4$ be the class of triangle-free planar graphs. Let $G = (V,E)$ be a counter-example to Theorem~\ref{main} with the minimum order. Let $n = |G|$ and $m = ||G||$. We will use the schemes presented in Observations~\ref{abg}--\ref{o2} many times throughout this paper.

\begin{obs} \label{abg}
Let $\alpha$, $\beta$, $\gamma$ be integers satisfying $\alpha \ge 1$, $\beta \ge 0$, $\gamma \ge 0$.
Let $H^* \in {\cal P}_4$ be a graph with $|H^*| = n - \alpha$ and $||H^*|| \le m - \beta$.
By minimality of $G$, $H^*$ admits an induced linear forest of order at least $\frac{9}{11}(n-\alpha) - \frac{2}{11}(m - \beta)$.
Given an induced linear forest $F^*$ of $H^*$ of order $|F^*| \ge \frac{9}{11}(n-\alpha) - \frac{2}{11}(m - \beta)$, if there is an induced linear forest $F$ of $G$ of order $|F| \ge |F^*| + \gamma$, then as $|F| < \frac{9}{11}n - \frac{2}{11}m$, we have $\gamma < \frac{9}{11}\alpha - \frac{2}{11}\beta$.
\end{obs}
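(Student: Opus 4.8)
The plan is to chain two estimates: a lower bound on $|F^*|$ coming from the minimality of $G$, and an upper bound on $|F|$ coming from the fact that $G$ is itself a counter-example. First I would note that, since $\alpha \ge 1$, we have $|H^*| = n - \alpha < n = |G|$, while $H^* \in {\cal P}_4$ by hypothesis. Because $G$ is a counter-example of \emph{minimum} order, $H^*$ cannot be a counter-example, so it satisfies the conclusion of Theorem~\ref{main}: it admits an induced linear forest of order at least $\frac{9|H^*| - 2||H^*||}{11}$. Substituting $|H^*| = n - \alpha$ and using $||H^*|| \le m - \beta$ together with the fact that the coefficient of $||H^*||$ is negative (so a smaller size only improves the bound), this quantity is at least $\frac{9}{11}(n-\alpha) - \frac{2}{11}(m-\beta)$, which is precisely the stated lower bound on $|F^*|$.

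Next I would record the upper bound. Since $G$ is a counter-example to Theorem~\ref{main}, its largest induced linear forest has order strictly less than $\frac{9n-2m}{11}$, so every induced linear forest $F$ of $G$ satisfies $|F| < \frac{9}{11}n - \frac{2}{11}m$. Combining the hypothesis $|F| \ge |F^*| + \gamma$ with the lower bound on $|F^*|$ just obtained gives
$$\frac{9}{11}(n-\alpha) - \frac{2}{11}(m-\beta) + \gamma \le |F^*| + \gamma \le |F| < \frac{9}{11}n - \frac{2}{11}m.$$
Cancelling the common terms $\frac{9}{11}n$ and $-\frac{2}{11}m$ from the two ends and rearranging then yields exactly $\gamma < \frac{9}{11}\alpha - \frac{2}{11}\beta$, as required.

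I do not expect a genuine obstacle here: the statement is really a bookkeeping lemma that packages, once and for all, the reducibility argument to be reused throughout the discharging analysis. The only points demanding care are that the hypothesis $\alpha \ge 1$ is exactly what guarantees $|H^*| < |G|$, so that minimality can be invoked, and that each future application must check $H^* \in {\cal P}_4$ (triangle-free and planar), since minimality is only available within that class. The conditions $\beta \ge 0$ and $\gamma \ge 0$ play no role in the derivation itself; they merely record the intended usage, in which $H^*$ arises from $G$ by deleting $\alpha$ vertices and at least $\beta$ edges, and $F$ is built by extending $F^*$ with $\gamma$ further vertices.
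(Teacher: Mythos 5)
Your proposal is correct and follows exactly the argument the paper itself embeds in the statement of Observation~\ref{abg}: invoke minimality of the counter-example $G$ (legitimate since $\alpha \ge 1$ forces $|H^*| < |G|$ and $H^* \in {\cal P}_4$) to lower-bound $|F^*|$, use that $G$ is a counter-example to upper-bound $|F|$, and chain the two inequalities so the terms $\frac{9}{11}n$ and $\frac{2}{11}m$ cancel. Your additional remarks—that the sign of the coefficient $-\frac{2}{11}$ is what lets $||H^*|| \le m-\beta$ strengthen the bound, and that $\beta, \gamma \ge 0$ merely reflect intended usage—are accurate glosses on the same argument.
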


\begin{figure}
\begin{center}
\begin{tikzpicture}[scale=1.5]

\draw [fill=gray!20, draw = gray!20] (-3,0) circle (1) ;

\draw [very thick] [-,>=latex] (-3.3,0) to[out = 45, in = -90] (-3,0.7);
\draw [very thick] [-,>=latex] (-3.5,-0.4) to[out = 0, in = 180] (-2.9,-0.7) to[out = 0, in = 0] (-2.9,-0.1);
\draw [very thick] [-,>=latex] (-2.5,0.2)  to[out = 90, in = 0] (-2.7, 0.4) to[out = 180, in = 0] (-2.7,0.7);
\draw (-3.05,0) node {$L$};
\draw [gray] (-2,-1) node {$M$};

\end{tikzpicture}~~~~~~~
\begin{tikzpicture}[scale=1.5]

\draw [fill=gray!20, draw = gray!20] (-3,0) circle (1) ;

\draw [very thick] [-,>=latex] (-3.3,0) to[out = 45, in = -90] (-3,0.9);
\draw [-,>=latex] (-3,0.9) to (-3,1.1);
\draw [dashed] (-3,1.1) to (-3,1.5);
\draw [very thick] [-,>=latex] (-3.5,-0.4) to[out = 0, in = 180] (-2.9,-0.7) to[out = 0, in = 0] (-2.9,-0.1);
\draw [very thick] [-,>=latex] (-2.5,0.2)  to[out = 90, in = 0] (-2.7, 0.4) to[out = 180, in = 0] (-2.7,0.7);
\draw (-3.05,0) node {$L$};
\draw [gray] (-2,-1) node {$M$};
\draw [fill=black] (-3,0.9) circle (1.5pt) ;
\draw [fill=black] (-3,1.1) circle (1.5pt) ;
\draw (-3,0.85) node [right] {$u$};
\draw (-3,1.1) node [right] {$v$};

\end{tikzpicture}
\caption{The situation in Observation \ref{o1} (left) and Observation \ref{o2} (right). Thick lines are paths, normal lines are edges and dashed lines are edges that may or may not be present.}\label{figobs}
\end{center}
\vspace{-0.5cm}
\end{figure}
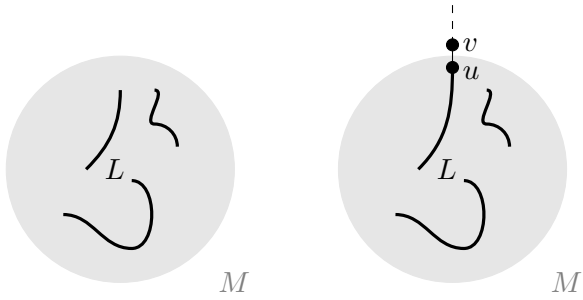

\begin{obs} \label{o1}
Suppose $L \subset V$ induces a linear forest in $G$, and $M$ is a set of vertices such that $M \cap L = \emptyset$ and $M \supset N(L)$.
Let $G' = G - M - L$. See Figure~\ref{figobs} (left) for an illustration.
By minimality of $G$, $G'$ admits a linear forest $F'$ with $|F'| \ge \frac{9}{11}|G'| - \frac{2}{11}||G'||$. Observe that $F = G[V(F') \cup L]$ is an induced linear forest of $G$.
As $G$ is a counter-example to Theorem~\ref{main}, $|F| < \frac{9}{11}|G| - \frac{2}{11}||G||$. 
Therefore $|L| = |F| - |F'| < \frac{9}{11}(|M|+|L|) - \frac{2}{11}(||G|| - ||G'||)$.
\end{obs}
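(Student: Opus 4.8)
The plan is to use minimality exactly as the statement foreshadows. First I would note that the reduced graph $G' = G - M - L$ is again triangle-free and planar, since both properties are preserved under deleting vertices, so $G' \in {\cal P}_4$; and provided $L \neq \emptyset$ it is strictly smaller than $G$. Hence the minimality principle of Observation~\ref{abg} applies to $G'$ and furnishes an induced linear forest $F'$ of $G'$ with $|F'| \ge \frac{9}{11}|G'| - \frac{2}{11}||G'||$.

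The heart of the argument is to check that $F := G[V(F') \cup L]$ is an induced linear forest of $G$. By hypothesis $G[L]$ is a linear forest, and $F' = G[V(F')]$ is a linear forest by construction (as $G'$ is an induced subgraph of $G$). The decisive point is that $G$ has no edge between $L$ and $V(F')$: since $M \supset N(L)$, every vertex adjacent to $L$ lies in $M$ and is therefore deleted when passing to $G'$, so $V(F') \subseteq V \setminus (M \cup L)$ contains no neighbour of $L$. Thus $G[V(F') \cup L]$ decomposes as the disjoint union of the two linear forests $G[L]$ and $F'$ with no edges between them, and so is itself a linear forest; in particular $|F| = |F'| + |L|$.

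Having exhibited $F$, I would invoke the assumption that $G$ is a counter-example to Theorem~\ref{main}: every induced linear forest of $G$ has order strictly below $\frac{9}{11}n - \frac{2}{11}m$, so $|F| < \frac{9}{11}|G| - \frac{2}{11}||G||$. Subtracting the lower bound on $|F'|$ from this upper bound on $|F| = |F'| + |L|$ and using $|G| - |G'| = |M| + |L|$ gives
\[
|L| \;<\; \frac{9}{11}\bigl(|M| + |L|\bigr) - \frac{2}{11}\bigl(||G|| - ||G'||\bigr),
\]
which is precisely the asserted inequality.

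I expect the one subtle step to be the linear-forest verification, namely that adjoining all of $L$ to $F'$ creates neither a cycle nor a vertex of degree three. This is where the full strength of the hypothesis $M \supset N(L)$ is used: it is exactly what severs $L$ from the rest of $G'$, and it is what makes the orders add cleanly as $|F| = |F'| + |L|$. The remaining ingredients are routine, being the closure of ${\cal P}_4$ under vertex deletion, the minimality bound, and the final arithmetic, subject only to the mild requirement that $M \cup L$ be non-empty so that $G'$ is a proper subgraph and minimality can be applied.
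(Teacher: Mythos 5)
Your proposal is correct and follows exactly the argument the paper gives inline in Observation~\ref{o1}: apply minimality to $G' = G - M - L \in {\cal P}_4$, observe that $M \supset N(L)$ severs all edges between $L$ and $V(F')$ so that $G[V(F') \cup L]$ is an induced linear forest with $|F| = |F'| + |L|$, and subtract the two bounds using $|G| - |G'| = |M| + |L|$. Your added remarks (closure of ${\cal P}_4$ under vertex deletion, and the need for $M \cup L \neq \emptyset$ so that minimality applies) are minor points of care that the paper leaves implicit, not a different route.
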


\begin{obs} \label{o2}
Suppose $L \subset V$ induces a linear forest in $G$. Suppose there is a set of vertices $M$ and two vertices $u \in L$ and $v$ such that $M \cap L = \emptyset$, $\{v\} = N(L) \setminus M$, and $\{u\} = N(v) \cap L$. 
Let $G' = G - M - L$. Suppose $v$ is a $1^-$-vertex in $G'$ and $u$ is a $1^-$-vertex in $G[L]$. See Figure~\ref{figobs} (right) for an illustration.
By minimality of $G$, $G'$ admits a linear forest $F'$ with $|F'| \ge \frac{9}{11}|G'| - \frac{2}{11}||G'||$. Observe that $F = G[V(F') \cup L]$ is an induced linear forest of $G$.
As $G$ is a counter-example to Theorem~\ref{main}, $|F| < \frac{9}{11}|G| - \frac{2}{11}||G||$. 
Therefore $|L| = |F| - |F'| < \frac{9}{11}(|M|+|L|) - \frac{2}{11}(||G|| - ||G'||)$.
\end{obs}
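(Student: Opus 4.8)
The plan is to exploit the minimality of the counter-example $G$ by passing to the strictly smaller graph $G' = G - M - L$, extracting a large induced linear forest there, and then lifting it back to $G$ together with the whole of $L$. First I would check that minimality really applies: $G'$ is an induced subgraph of $G$, hence still triangle-free and planar, so $G' \in {\cal P}_4$, and $|G'| = n - |M| - |L| < n$ because $|L| \ge 1$. Minimality then provides an induced linear forest $F'$ of $G'$ with $|F'| \ge \frac{9}{11}|G'| - \frac{2}{11}||G'||$; since $V(F') \subseteq V \setminus (M \cup L)$ is disjoint from $L$, the merged set satisfies $|V(F') \cup L| = |F'| + |L|$.

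The heart of the argument is to check that $F = G[V(F') \cup L]$ is an induced linear forest; this is exactly where the present situation differs from Observation~\ref{o1}, in which $M \supseteq N(L)$ forces $L$ and $F'$ to be non-adjacent. Here $N(L) \setminus M = \{v\}$ with $v \notin M \cup L$, so $v$ is a vertex of $G'$ and is the only vertex outside $M$ that can be adjacent to $L$. Consequently the only edge of $G$ that can join $L$ to $V(F')$ is incident to $v$, and since $N(v) \cap L = \{u\}$ this edge can only be $uv$, which is present precisely when $v \in V(F')$. Thus $F$ is the disjoint union of the two linear forests $G[L]$ and $G[V(F')]$, joined by at most the single edge $uv$.

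It remains to see that adjoining $uv$ keeps $F$ a linear forest, and this is what the two degree hypotheses deliver. As $v$ is a $1^-$-vertex in $G'$ we have $\deg_{F'}(v) \le \deg_{G'}(v) \le 1$, and as $u$ is a $1^-$-vertex in $G[L]$ we have $\deg_{G[L]}(u) \le 1$; hence in $F$ both $u$ and $v$ have degree at most $2$, while every other vertex retains its (at most $2$) degree from one of the two forests. Moreover $u$ and $v$ lie in different components of $G[L] \cup G[V(F')]$, so the bridge $uv$ cannot close a cycle. Therefore $F$ has maximum degree at most $2$ and is acyclic, i.e. it is an induced linear forest of $G$.

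Finally I would close the loop with the counter-example inequality. Since $F$ is an induced linear forest of the minimal counter-example $G$, we must have $|F| < \frac{9}{11}|G| - \frac{2}{11}||G||$, whereas $|F'| \ge \frac{9}{11}|G'| - \frac{2}{11}||G'||$. Subtracting these and using $|F| - |F'| = |L|$ together with $|G| - |G'| = |M| + |L|$ isolates $|L|$ and yields the claimed bound $|L| < \frac{9}{11}(|M|+|L|) - \frac{2}{11}(||G|| - ||G'||)$. The only genuinely delicate point is the linear-forest verification: one must be certain that the unique bridging edge $uv$ raises no degree above $2$ and closes no cycle, and the hypotheses that $v$ is $1^-$ in $G'$ and $u$ is $1^-$ in $G[L]$ are tailored precisely to guarantee this. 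Everything else is routine bookkeeping with the minimality estimate.
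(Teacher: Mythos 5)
Your proposal is correct and follows exactly the argument the paper intends: pass to $G' = G - M - L$ by minimality, lift $F'$ together with all of $L$, and note that the degree hypotheses on $u$ and $v$ ensure the single possible bridging edge $uv$ neither raises a degree above $2$ nor closes a cycle, so $F = G[V(F') \cup L]$ is an induced linear forest and the counter-example inequality yields the bound. The paper compresses this verification into the phrase ``Observe that $F$ is an induced linear forest,'' and your write-up simply supplies the details of that observation; there is no substantive difference in approach.
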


Now we want to prove some structural properties of $G$, so that we can later show that the counter-example $G$ does not exist, and thus that Theorem~\ref{main} is true.
First note that $G$ is connected, otherwise one of its components would be a smaller counter-example to Theorem~\ref{main}. Then note that every vertex of $G$ has degree at most $4$. Otherwise, by considering a vertex of degree at least $5$ and by Observation~\ref{abg} applied to $H^* = G - v$ with $(\alpha,\beta,\gamma) = (1,5,0)$ and $F = F^*$, we have $0 < \frac{9}{11} - 5\frac{2}{11}$, a contradiction. 

\begin{figure}
\begin{center}
\begin{tikzpicture}[scale=1.5]

\draw [fill=gray!20, draw = gray!20] (-3,0) circle (1) ;

\draw [very thick] [-,>=latex] (-3.3,0) to[out = 45, in = -90] (-3,0.9);
\draw [dashed] (-3,1.1) to (-3.2,1.3);
\draw [dashed] (-3,1.1) to (-2.8,1.3);
\draw [-,>=latex] (-3,0.9) to (-3,1.1);
\draw [very thick] [-,>=latex] (-3.5,-0.4) to[out = 0, in = 180] (-2.9,-0.7) to[out = 0, in = 0] (-2.9,-0.1);
\draw [very thick] [-,>=latex] (-2.5,0.2)  to[out = 90, in = 0] (-2.7, 0.4) to[out = 180, in = 0] (-2.7,0.7);
\draw (-3,0) node {$P$};
\draw [gray] (-2,-1) node {$N$};
\draw [fill=black] (-3,0.9) circle (1.5pt) ;
\draw [fill=black] (-3,1.1) circle (1.5pt) ;
\draw (-3,0.85) node [right] {$u$};
\draw (-3,1.1) node [right] {$v$};
\end{tikzpicture}~~~~~
\begin{tikzpicture}[scale=1.5]

\draw [fill=gray!20, draw = gray!20] (3,0) circle (1) ;

\draw [very thick] [-,>=latex] (2.1,0) to[out = 0, in = -90] (3,0.9);
\draw [dashed] (3,1.1) to (3.2,1.3);
\draw [dashed] (3,1.1) to (2.8,1.3);
\draw [dashed] (1.8,0) to (1.6,0.2);
\draw [dashed] (1.8,0) to (1.6,-0.2);
\draw [-,>=latex] (3,0.9) to (3,1.1);
\draw [-,>=latex] (2.1,0) to (1.9,0);
\draw [very thick] [-,>=latex] (2.5,-0.4) to[out = 0, in = 180] (3.1,-0.7) to[out = 0, in = 0] (3.1,-0.1);
\draw [very thick] [-,>=latex] (3.5,0.2)  to[out = 90, in = 0] (3.3, 0.4) to[out = 180, in = 0] (3.3,0.7);
\draw (3,0) node {$P$};
\draw [gray] (4,-1) node {$N$};
\draw [fill=black] (3,0.9) circle (1.5pt) ;
\draw [fill=black] (3,1.1) circle (1.5pt) ;
\draw (3,0.85) node [right] {$u_0$};
\draw (3,1.1) node [right] {$v_0$};
\draw [fill=black] (2.1,0) circle (1.5pt) ;
\draw [fill=black] (1.9,0) circle (1.5pt) ;
\draw (2.3,0) node [below] {$u_1$};
\draw (1.9,0) node [below] {$v_1$};

\end{tikzpicture}
\caption{A simple chain (left) and a double chain (right).}\label{figschain}
\end{center}
\vspace{-0.5cm}
\end{figure}
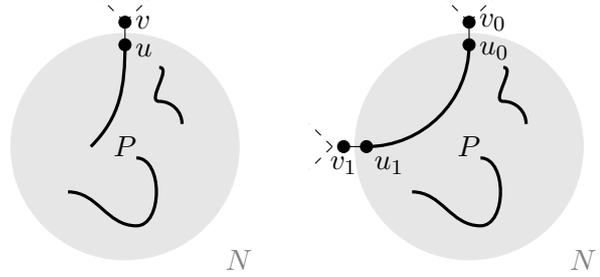

Let us define the notion of a \emph{chain} of $G$ (or \emph{simple chain}) of $G$ which is a quadruplet $C = (P,N,u,v)$ such that:

\begin{itemize}
\item $P \subset V$, $N \subset V \setminus P$, $u \in P$, and $v \in V \backslash (N \cup P)$;

\item $G[P]$ is a linear forest;

\item vertex $u$ is a $1^-$-vertex of $G[P]$, and $N(v) \cap P = \{u\}$;

\item $N(P) \subset N \cup \{v\}$ in $G$;

\item vertex $v$ is a $2^-$-vertex in $G - (N \cup P)$.
\end{itemize}

See Figure~\ref{figschain} (left) for an illustration. We will use the following notation for a chain $C = (P,N,u,v)$ of $G$: 

\begin{itemize}
\item $|C| = |P| + |N|$;

\item $G - C = G - (N \cup P)$;

\item $d(C)$ is the degree of $v$ in $G - C$ (thus $d(C) \le 2$);

\item $||C|| = ||G|| - ||G - C||$.
\end{itemize}

We will now prove the following lemma:

\begin{lemm} \label{schain}
For every chain $C = (P,N,u,v)$ of $G$, $|P| < \frac{9}{11}|C| - \frac{2}{11}(||C||-\frac{1}{2})$.
\end{lemm}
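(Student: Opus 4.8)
The plan is to apply the reduction schemes of Observations~\ref{o1} and~\ref{o2} with $L=P$ and $M=N$, and then to split on the value $d(C)\in\{0,1,2\}$. With this choice the chain axioms are exactly the hypotheses of Observation~\ref{o2}: indeed $N\cap P=\emptyset$, $N(P)\setminus N=\{v\}$, $N(v)\cap P=\{u\}$, $u$ is a $1^-$-vertex of $G[P]$, and $G'=G-M-L=G-C=:H$. Throughout, write $f(X)$ for the maximum order of an induced linear forest of a graph $X$; for any $X$ with $|X|<n$ minimality of $G$ gives $f(X)\ge\frac9{11}|X|-\frac2{11}\|X\|$.

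\emph{Case $d(C)\le 1$.} Then $v$ is a $1^-$-vertex of $H=G'$, so Observation~\ref{o2} applies verbatim and yields $|P|<\frac9{11}|C|-\frac2{11}\|C\|$, which is stronger than the claimed bound (the slack is exactly the extra $\frac1{11}$ coming from the $-\frac12$ edge term). So this case is free, and all the work is in the case $d(C)=2$.

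\emph{Case $d(C)=2$.} Write $w_1,w_2$ for the two neighbours of $v$ in $H$. The generic idea is to take a large induced linear forest $F_H$ of (a minor of) $H$ and attach the whole of $P$ through the single edge $uv$; since $v$ is the only vertex of $H$ adjacent to $P$, this stays a linear forest as soon as $v$ has at most one neighbour inside $F_H$ (so adding $uv$ keeps $\deg v\le2$ and creates no cycle). Concretely I would take a maximum induced linear forest $F'$ of $H-v$. If $F'$ omits at least one of $w_1,w_2$, then $F'+v$ is an induced linear forest of $H$ in which $v$ has degree $\le1$, and attaching $P$ produces a forest of order $|F'|+1+|P|$; Observation~\ref{abg} then gives a bound well below the target. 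The obstruction is precisely when \emph{every} maximum forest of $H-v$ contains both $w_1$ and $w_2$: attaching $v$ always overloads it, and naively dropping $v$ (or one neighbour) loses a whole vertex, giving only $|P|<\frac9{11}|C|-\frac2{11}\|C\|+\frac5{11}$, which is too weak by $\frac4{11}$.

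To close this obstructive subcase I would use that $w_1,w_2$ are now forced. If some $w_i$ has a neighbour in $H$ other than $v$, i.e. $\deg_H(w_i)\ge2$, forcedness gives $f(H-v-w_i)=f(H-v)-1$; applying minimality to $H-v-w_i$ (which has $\deg_H(w_i)+1$ fewer edges than $H-v$) improves the lower bound on $f(H-v)$ by $\frac2{11}\deg_H(w_i)\ge\frac4{11}$, exactly enough to reach the target after attaching $P$ to a maximum forest of $H-v$ (which omits $v$). The last possibility is $\deg_H(w_1)=\deg_H(w_2)=1$; since $v$ has no other neighbour in $H$, the set $\{v,w_1,w_2\}$ is then a $P_3$-component of $H$. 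Here I delete the whole component: a maximum forest $F''$ of $H-\{v,w_1,w_2\}$ has $|F''|\ge\frac9{11}(|H|-3)-\frac2{11}(\|H\|-2)$, and $G[V(F'')\cup\{w_1,w_2\}\cup P]$ is an induced linear forest (the pendants $w_1,w_2$ become isolated, and $v$ being gone, $P$ is attached to nothing), of order $|F''|+2+|P|$. Substituting yields \emph{exactly} $|P|<\frac9{11}|C|-\frac2{11}\|C\|+\frac1{11}$, which is the stated bound. This last configuration is the main obstacle: when $v$'s two neighbours are both pendant in $G-C$, neither keeping $v$ nor sacrificing a single vertex is affordable, and the only way through is to remove the whole $P_3$-component and recover the two isolated vertices $w_1,w_2$; this is where the estimate is tight and where the $-\frac12$ edge credit of the statement originates. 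The only routine verifications are that the constructed vertex sets are pairwise non-adjacent (using $N(P)\cap H=\{v\}$ and $w_1,w_2\notin N(P)$), so that the induced subgraphs really are linear forests.
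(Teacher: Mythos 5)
Your handling of $d(C)\le 1$ is correct (Observation~\ref{o2} with $L=P$, $M=N$ does cover both values at once, with room to spare), and your terminal sub-case where $\{v,w_1,w_2\}$ is a $P_3$-component of $G-C$ is also correct and, as you note, exactly tight. But the case $d(C)=2$ has a genuine gap at its very first step: you claim that if a maximum induced linear forest $F'$ of $H-v$ (where $H=G-C$) omits at least one of $w_1,w_2$, then $F'+v$ is again an induced linear forest. This is false in general: if, say, $w_2\notin F'$ but $w_1\in F'$ is an \emph{internal} vertex of a path of $F'$ (degree $2$ in $F'$), then adding $v$ gives $w_1$ degree $3$, and the construction collapses. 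Nothing in your setup controls degrees \emph{inside} an unknown maximum forest; this is precisely why Observation~\ref{o2} requires $v$ to be a $1^-$-vertex in the \emph{graph} $G'$, not merely in some chosen forest. (There is also a small slip in your edge count: $H-v-w_i$ has $\deg_H(w_i)+1$ fewer edges than $H$, hence only $\deg_H(w_i)-1$ fewer than $H-v$; your final figure of a $\frac{2}{11}\deg_H(w_i)$ improvement happens to match the correct count, so this one is cosmetic.)

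Because of this, your dichotomy is miscalibrated. The true obstruction is not ``every maximum forest of $H-v$ contains both $w_1$ and $w_2$,'' but the weaker ``every maximum forest of $H-v$ contains both $w_i$, or contains exactly one of them as a degree-$2$ vertex.'' In that weaker situation your forcedness argument can fail outright: there may be one maximum forest containing only $w_1$ (internally) and another containing only $w_2$ (internally), so that $f(H-v-w_i)=f(H-v)$ for both $i$, no $\frac{2}{11}\deg_H(w_i)$ gain is available, and the extension-by-$v$ move is unavailable too; your remaining tools then give only the naive bound, which you yourself computed to be $\frac{4}{11}$ too weak. Note how the paper sidesteps this entirely: it never inserts $v$ into an unknown maximum forest, but instead splits on the degrees of $w_0,w_1$ in $G-C$ (a graph-side condition), applies Observation~\ref{o1} or~\ref{o2} when a genuinely low-degree vertex is available, and --- crucially, in the mixed case where one $w_i$ is a $3^-$-vertex and the other a $3^+$-vertex --- forms the larger chain $(P\cup\{v\},N\cup\{w_1\},v,w_0)$ and invokes the maximality of $|C|$, i.e.\ an induction on chains. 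Some mechanism of that kind, covering the configurations where neither neighbour of $v$ can be cheaply discarded or absorbed, is what your argument is missing.
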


\begin{proof}
Let us consider by contradiction a chain $C = (P,N,u,v)$ such that $|P| \ge \frac{9}{11}|C| - \frac{2}{11}(||C||-\frac{1}{2})$ maximizing $|C|$.

\begin{itemize}
\item Suppose $d(C) = 0$. Let $G' = (G - C) - v$. The set $P \cup \{v\}$ induces a linear forest, and its neighbourhood is a subset of $N$. By Observation~\ref{o1} applied to $L = P \cup \{v\}$ and $M = N$, we have $|P| + 1 < \frac{9}{11}(|C| + 1) - \frac{2}{11}||C||$. As $|P| \ge \frac{9}{11}|C|  - \frac{2}{11}(||C||-\frac{1}{2})$, it follows that $1 < \frac{9}{11} -  \frac{2}{11}\frac{1}{2}$, a contradiction.

\item Suppose $d(C) = 1$. The set $P$ induces a linear forest, and its neighbourhood is a subset of $N \cup \{v\}$. Furthermore, $N(v) \cap P = \{u\}$, $N(u) \cap (G - C) = \{v\}$, and $u$ and $v$ are $1$-vertices in $P$ and $G - C$ respectively. By Observation~\ref{o2} applied to $L = P$ and $M = N $, we have $|P| < \frac{9}{11}|C| - \frac{2}{11}||C||$, thus $|P|  < \frac{9}{11}|C|  - \frac{2}{11}(||C||-\frac{1}{2})$, a contradiction.

%Let $w$ be the neighbour of $v$ in $G - C$. Suppose $w$ is a $4^+$-vertex in $G - C$.  The set $P \cup \{v\}$ induces a path, and its neighbourhood is $N \cup \{w\}$. By Observation~\ref{o1} applied to $L = P \cup \{v\}$, $|P| + 1 < a(|C| + 2) - \frac{2}{11}(||C||+4)$, thus $|P|  < a|C|  - \frac{2}{11}(||C||-\frac{1}{2})$, a contradiction.
%
%Now $w$ is a $3^-$-vertex in $G - C$. Let $C' = (P \cup\{v\}, N, v, w)$. Then $C'$ is a chain of $G$, and by maximality of $|C|$, $|P| + 1 < a(|C| + 1) - \frac{2}{11}(||C|| + \frac{1}{2})$, thus $|P| < a|C| - \frac{2}{11}(||C||- \frac{1}{2})$, a contradiction.

\item Suppose $d(C) = 2$. Let $w_0$ and $w_1$ be the neighbours of $v$ in $G - C$. 

\begin{itemize}
\item Suppose one of the $w_i$'s, say $w_0$, has degree $1$ in $G - C$. Let $G' = (G - C) - \{v, w_0, w_1\}$. The set $P \cup \{v,w_0\}$ induces a linear forest and its neighbourhood is a subset of $N \cup \{w_1\}$. By Observation~\ref{o1} applied to $L = P \cup \{v,w_0\}$ and $M = N \cup \{w_1\}$, we have $|P| + 2 < \frac{9}{11}(|C| + 3) - \frac{2}{11}(||C||+2)$. As $|P|  \ge \frac{9}{11}|C|  - \frac{2}{11}(||C||-\frac{1}{2})$, it follows that $2 < \frac{9}{11}3 -  \frac{2}{11}\frac{5}{2}$, a contradiction.

\item Suppose the $w_i$'s both have degree $2$ in $G - C$. Observe that they are not adjacent since $G$ is triangle-free. The set $P \cup \{v\}$ induces a linear forest, and its neighbourhood is a subset of $N \cup \{w_0,w_1\}$. Furthermore, $N(w_1) \cap (P \cup \{v\}) = \{v\}$, $N(v) \cap ((G - C )- \{v,w_0\}) = \{w_1\}$, and $v$ and $w_1$ are $1$-vertices in $P \cup \{v\}$ and $G - C - \{v,w_0\}$ respectively. By Observation~\ref{o2} applied to $L = P \cup \{v\}$ and $M = N \cup \{w_0\}$, we have $|P| + 1 < \frac{9}{11}(|C| + 2) - \frac{2}{11}(||C||+3)$. As $|P|  \ge \frac{9}{11}|C|  - \frac{2}{11}(||C||-\frac{1}{2})$, it follows that $1 < \frac{9}{11}2 -  \frac{2}{11}\frac{7}{2}$, a contradiction.

\item Suppose the $w_i$'s both have degree $4$ in $G - C$. Again they are not adjacent since $G$ is triangle-free.  
The set $P \cup \{v\}$ induces a linear forest, and its neighbourhood is a subset of $N \cup \{w_1,w_0\}$. By Observation~\ref{o1} applied to $L = P \cup \{v\}$ and $M = N \cup \{w_0,w_1\}$, we have $|P| + 1 < \frac{9}{11}(|C| + 3) - \frac{2}{11}(||C||+8)$. As $|P| \ge \frac{9}{11}|C|  - \frac{2}{11}(||C||-\frac{1}{2})$, it follows that $1 < \frac{9}{11}3 -  \frac{2}{11}\frac{17}{2}$, a contradiction.

\item Suppose one of the $w_i$'s, say $w_0$, is a $3^-$-vertex in $G - C$ and the other one is a $3^+$-vertex in $G - C$. Let $C' = (P \cup\{v\}, N \cup \{w_1\}, v, w_0)$. Then $C'$ is a chain of $G$, and by maximality of $|C|$, we have $|P| + 1 < \frac{9}{11}(|C| + 2) - \frac{2}{11}(||C|| + \frac{7}{2})$. As $|P| \ge \frac{9}{11}|C| - \frac{2}{11}(||C||- \frac{1}{2})$, it follows that $1 < \frac{9}{11}2 -  \frac{2}{11}4$, a contradiction.
\end{itemize}
\end{itemize}
\vspace{-0.8cm}
\end{proof}

Let us now define a new notion quite similar to the notion of chain. A \emph{double chain} of $G$ is a sextuplet $C = (P,N,u_0,u_1,v_0,v_1)$, so that:

\begin{itemize}
\item $P \subset V$, $N \subset V \setminus P$, $u_0 \in P$, $u_1 \in P$, $v_0 \in V \backslash (N \cup P)$ and $v_1 \in V \backslash (N \cup P)$;

\item $v_0 \ne v_1$;

\item $G[P]$ is a linear forest;

\item $u_0$ and $u_1$ are $1^-$-vertices of $G[P]$ if they are distinct, a $0$-vertex of $G[P]$ if they are equal, and for $i \in \{0,1\}$, $N(v_i) \cap P = \{u_i\}$;

\item $N(P) \subset N \cup \{v_0\} \cup \{v_1\}$;

\item $v_0$ and $v_1$ are $2^-$-vertices in $G - (N \cup P)$.
\end{itemize}

See Figure~\ref{figschain} (right) for an illustration. We will use the following notations for a double chain $C = (P,N,u_0,u_1,v_0,v_1)$ of $G$: 

\begin{itemize}
\item $|C| = |P| + |N|$;

\item $G - C = G - (N \cup P)$;

\item $d_0(C)$ is the degree of $v_0$ in $G - C$ (thus $d_0(C) \le 2$);

\item $d_1(C)$ is the degree of $v_1$ in $G - C$ (thus $d_1(C) \le 2$);

\item $||C|| = ||G|| - ||G - C||$.
\end{itemize}

A double chain $C = (P,N,u_0,u_1,v_0,v_1)$ of $G$ such that $v_0$ and $v_1$ are on different components of $G - C$ is called a \emph{separating double chain} of $G$.

We will now prove the following lemmas:

\begin{lemm} \label{dchain}
For every double chain $C = (P,N,u_0,u_1,v_0,v_1)$ of $G$, $|P| < \frac{9}{11}|C| - \frac{2}{11}(||C||-3)$.
\end{lemm}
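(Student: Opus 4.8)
The plan is to mimic the case analysis used in the proof of Lemma~\ref{schain}, but now splitting on the degree pair $(d_0(C), d_1(C))$ of the two exit vertices $v_0$ and $v_1$ in $G - C$. As in the single-chain argument, I would argue by contradiction: assume there is a double chain $C = (P,N,u_0,u_1,v_0,v_1)$ violating the inequality, i.e. with $|P| \ge \frac{9}{11}|C| - \frac{2}{11}(||C||-3)$, and among all such counterexamples choose one maximising $|C|$. The extra slack of $3$ (compared with the $\frac12$ in Lemma~\ref{schain}) reflects that we now have two dangling vertices rather than one, and this constant is what the case analysis must respect: each reduction must produce a contradiction of the form $\gamma < \frac{9}{11}\alpha - \frac{2}{11}\beta$ via Observations~\ref{abg}, \ref{o1}, or \ref{o2}, and the arithmetic must close given the weaker starting hypothesis.

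First I would handle the easy terminal degrees. If $d_0(C) = 0$ (or symmetrically $d_1(C)=0$), then $P \cup \{v_0\}$ induces a linear forest whose neighbourhood lies in $N \cup \{v_1\}$, and $v_0$ contributes no new edges into $G - C - v_0$; I can then treat $(P \cup \{v_0\}, N, u_1, v_1)$ essentially as a single chain or double chain with one fewer exit, absorbing $v_0$ and invoking the appropriate earlier observation. If $d_0(C)=1$, the vertex $v_0$ has a unique neighbour $w$ in $G - C$, and I would apply Observation~\ref{o2} with $L = P \cup \{v_0\}$ (using $u$ playing the role of $v_0$ and the outside vertex playing the role of the single neighbour), reducing to a configuration that either closes immediately or reduces to a single chain on $v_1$. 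The point is that whenever one of the two exit vertices has degree $0$ or $1$ in $G - C$, we can consume it and fall back on Lemma~\ref{schain} or on a direct Observation, and the slack $3$ is more than enough to make these close.

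The genuinely new work is when both $v_0$ and $v_1$ have degree exactly $2$ in $G - C$. Here I would further split on the degrees of the (at most four) neighbours $w$ of $v_0, v_1$ in $G - C$, mirroring the subcase tree of Lemma~\ref{schain} (some $w_i$ of degree $1$; both of degree $2$; both of degree $4$; a mixed $3^-/3^+$ split that spawns a strictly larger chain contradicting maximality). For each subcase I would pick $L$ to be $P$ together with $v_0$, $v_1$, and possibly a degree-$1$ neighbour, set $M$ to be $N$ together with the remaining neighbours, compute $\alpha = |M|+|L|-|C|$ and $\beta = ||G||-||G'||-||C||$ exactly as before, and check that the resulting inequality $\gamma < \frac{9}{11}\alpha - \frac{2}{11}\beta$ is violated. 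The extra subtlety relative to the single chain is that the separating-versus-nonseparating distinction may matter: if $v_0$ and $v_1$ lie in the same component of $G - C$, removing both can merge or create paths, and I must verify that $G[V(F') \cup P \cup \{v_0, v_1\}]$ is still a linear forest (no cycle through both exits, no degree-$3$ vertex at $u_0$ or $u_1$) before applying Observation~\ref{o1}; triangle-freeness again guarantees non-adjacency of same-degree neighbour pairs.

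The main obstacle I anticipate is the combinatorial explosion of degree patterns for the two pairs of neighbours, combined with the need to keep the induced subgraph $G[V(F') \cup L]$ a genuine \emph{linear} forest — that is, ensuring that attaching $P$, $v_0$, and $v_1$ back onto $F'$ never raises a vertex to degree $3$ nor closes a cycle. The cycle-avoidance is where the separating double chain hypothesis and triangle-freeness do the real load-bearing, and where I expect to spend most of the care; the arithmetic itself, as in Lemma~\ref{schain}, should fall out routinely once the correct $L$, $M$, and the corresponding $(\alpha,\beta,\gamma)$ are identified. I would therefore prove the cleanly separating cases first and then argue that any non-separating degree-$2$/degree-$2$ configuration either reduces to a larger chain (contradicting maximality) or is itself already excluded by planarity and triangle-freeness.
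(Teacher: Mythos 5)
Your overall strategy --- contradiction, choosing a counterexample maximising $|C|$, splitting on $(d_0(C),d_1(C))$, absorbing $v_0$ and its neighbours so as to fall back on Lemma~\ref{schain} or to spawn a larger double chain contradicting maximality --- is indeed the skeleton of the paper's proof, and your arithmetic framework ($\gamma < \frac{9}{11}\alpha - \frac{2}{11}\beta$ measured against the slack $3$ in the hypothesis) is the right one. However, there is a genuine gap: you never consider the case where $v_0$ and $v_1$ are \emph{adjacent}, and this is precisely the case on which your reductions break down; the paper makes this adjacency question its top-level case split. If $v_0v_1 \in E$, then (i) the quadruplet $(P \cup \{v_0\}, N', u_1, v_1)$ is \emph{not} a simple chain, because the chain definition requires $N(v_1) \cap (P \cup \{v_0\}) = \{u_1\}$ while here $v_0 \in N(v_1)$; (ii) you cannot take $L = P \cup \{v_0,v_1\}$ either, because if $u_0 \ne u_1$ lie on the same path of $G[P]$ the edge $v_0v_1$ closes a cycle, so $G[L]$ need not be a linear forest; and (iii) Observation~\ref{o2} is unavailable for $L = P \cup \{v_0\}$ with $v = v_1$, since its hypothesis $\{u\} = N(v) \cap L$ fails ($v_1$ is adjacent to both $u_1$ and $v_0$ in $L$). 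The paper resolves the adjacent case separately: for $d_0(C)=1$ it applies Observation~\ref{o1} with $L = P\cup\{v_0\}$ and $M = N \cup \{v_1\}$ (sacrificing $v_1$ into $M$), and for $d_0(C)=d_1(C)=2$ it lets $w$ be the second neighbour of $v_0$, notes $w \not\sim v_1$ by triangle-freeness, and splits on whether $w$ is a $2^-$- or $3^+$-vertex in $G-C$, using Observation~\ref{o2} resp.\ Observation~\ref{o1}.

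Two smaller points. First, your worry about the ``separating-versus-nonseparating distinction'' is misplaced: the separating hypothesis plays no role in Lemma~\ref{dchain} (it belongs to Lemma~\ref{sepdchain}, which has a different constant), and the linear-forest property of $G[V(F')\cup L]$ is automatic from the hypotheses of Observations~\ref{o1} and~\ref{o2} once $L$ and $M$ are chosen correctly --- no separate cycle-avoidance argument through $F'$ is needed. Second, your treatment of $d_0(C)=1$ via a direct application of Observation~\ref{o2} does not go through as stated: with $L = P \cup \{v_0\}$ and $v_1$ placed in $M$, the outside neighbour $w$ of $v_0$ would have to be a $1^-$-vertex of $G - M - L$, which is not given; the paper instead puts $w$ into the $N$-part and invokes Lemma~\ref{schain} on the simple chain $(P \cup \{v_0\}, N \cup \{w\}, u_1, v_1)$, which closes the arithmetic regardless of the degree of $w$.
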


\begin{proof}
Let us consider by contradiction a double chain $C = (P,N,u_0,u_1,v_0,v_1)$ such that $|P| \ge \frac{9}{11}|C| - \frac{2}{11}(||C||-3)$ maximizing $|C|$.

Suppose first that $v_0$ and $v_1$ are not adjacent.

\begin{itemize}
\item Suppose $d_0(C) = 0$. Then $(P \cup \{v_0\},N,u_1,v_1)$ is a simple chain of $G$. By Lemma~\ref{schain}, $|P|+1 < \frac{9}{11}(|C|+1) - \frac{2}{11}(||C||-\frac{1}{2})$. As $|P| \ge \frac{9}{11}|C| - \frac{2}{11}(||C||-3)$, we have $1 < \frac{9}{11}-\frac{5}{2}\frac{2}{11}$, a contradiction.

\item Suppose $d_0(C) = 1$. Let $w$ be the neighbour of $v_0$ in $G - C$. Then $(P \cup \{v_0\},N \cup \{w\},u_1,v_1)$ is a simple chain of $G$. By Lemma~\ref{schain}, $|P|+1 < \frac{9}{11}(|C|+2) - \frac{2}{11}(||C|| +\frac{1}{2})$. As $|P| \ge \frac{9}{11}|C| - \frac{2}{11}(||C||-3)$, we have  $1 < \frac{9}{11}2-\frac{2}{11}\frac{7}{2}$, a contradiction.

\item Suppose $d_0(C) = 2$. Let $w_0$ and $w_1$ be the neighbours of $v_0$ in $G - C$.

\begin{itemize}
\item Suppose one of the $w_i$'s, say $w_0$, has degree $1$ in $G - C$.
 Then $(P \cup \{v_0,w_0\},N \cup \{w_1\},u_1,v_1)$ is a simple chain of $G$. By Lemma~\ref{schain}, $|P|+2 < \frac{9}{11}(|C|+3) - \frac{2}{11}(||C|| + \frac{3}{2})$. As $|P| \ge \frac{9}{11}|C| - \frac{2}{11}(||C||-3)$, we have $2 < \frac{9}{11}3-\frac{2}{11}\frac{9}{2}$, a contradiction.
 
\item
Suppose that the $w_i$'s both have degree $2$ in $G - C$. Note that they are not adjacent since $G$ is triangle-free. They may, however, be adjacent to $v_1$. 

Suppose both of the $w_i$'s are adjacent to $v_1$. The set $P \cup \{v_0,v_1\}$ induces a linear forest in $G$, and its neighbourhood is a subset of $N \cup \{w_0,w_1\}$. By Observation~\ref{o1} applied to $L = P \cup \{v_0,v_1\}$ and $M = N \cup \{w_0,w_1\}$, we have $|P| + 2 < \frac{9}{11}(|C| + 4) - \frac{2}{11}(||C||+4)$. As $|P| \ge \frac{9}{11}|C|  - \frac{2}{11}(||C||-3)$, it follows that $2 < \frac{9}{11}4 -  \frac{2}{11}7$, a contradiction.

Therefore one of the $w_i$'s, say $w_0$, is not adjacent to $v_1$. Let $x$ be the neighbour of $w_0$ in $G - C$ distinct from $v_0$. Suppose $x$ has degree $4$ in $G - C - \{v_0,w_1\}$. Now $(P \cup \{v_0,w_0\},N \cup \{w_1,x\},u_1,v_1)$ is a chain of $G$. By Lemma~\ref{schain}, $|P|+2 < \frac{9}{11}(|C|+4) - \frac{2}{11}(||C|| + \frac{13}{2})$. As $|P| \ge \frac{9}{11}|C| - \frac{2}{11}(||C||-3)$, we have $2 < \frac{9}{11}4-\frac{2}{11}\frac{19}{2}$, a contradiction. Therefore $x$ is a $3^-$-vertex in $G - C - \{v_0,w_1\}$. Then $(P \cup \{v_0,w_0\},N \cup \{w_1\},w_0,u_1,x,v_1)$ is a double chain of $G$, so by maximality of $|C|$, $|P| + 2 < \frac{9}{11}(|C|+3) - \frac{2}{11}(||C|| + 4 -3)$. As $|P| \ge \frac{9}{11}|C| - \frac{2}{11}(||C||- 3)$, we have $2 < \frac{9}{11}3 -  \frac{2}{11}4$, a contradiction.

\item Suppose that the $w_i$'s are $4$-vertices in $G-C$. Now $(P \cup \{v_0\},N \cup \{w_0,w_1\},u_1,v_1)$ is a chain of $G$. By Lemma~\ref{schain}, $|P|+1 < \frac{9}{11}(|C|+3) - \frac{2}{11}(||C|| + \frac{15}{2})$. As $|P| \ge \frac{9}{11}|C| - \frac{2}{11}(||C||-3)$, we have $1 < \frac{9}{11}3-\frac{2}{11}\frac{21}{2}$, a contradiction.

\item Suppose one of the $w_i$, say $w_0$, is a $3^-$-vertex in $G - C$ and the other one is a $3^+$-vertex in $G - C$. Then $(P \cup \{v_0\},N \cup \{w_1\},v_0,u_1,w_0,v_1)$ is a double chain of $G$. By maximality of $|C|$, $|P| + 1 < \frac{9}{11}(|C|+2) - \frac{2}{11}(||C|| + 4 -3)$. As $|P| \ge \frac{9}{11}|C| - \frac{2}{11}(||C|| -3)$, we have $1 < \frac{9}{11}2 -  \frac{2}{11}4$, a contradiction. 
\end{itemize}
\end{itemize}

Now $v_0$ and $v_1$ are adjacent.

\begin{itemize}
\item Suppose $d_0(C) = 1$ or $d_1(C) = 1$, say $d_0(C) = 1$. The set $P \cup \{v_0\}$ induces a linear forest, and its neighbourhood is a subset of $N \cup \{v_1\}$. By Observation~\ref{o1} applied to $L = P \cup \{v_0\}$ and $M = N \cup \{v_1\}$, we have $|P| + 1 < \frac{9}{11}(|C| + 2) - \frac{2}{11}(||C||+1)$. As $|P| \ge \frac{9}{11}|C|  - \frac{2}{11}(||C||-3)$, it follows that $1 < \frac{9}{11}2 -  \frac{2}{11}4$, a contradiction.

\item Now $d_0(C) = 2$ and $d_1(C) = 2$. Let $w$ be the neighbour of $v_0$ in $(G - C )- v_1$. Note that $w$ is not adjacent to $v_1$, otherwise $v_0v_1w$ would be a triangle in $G$.

Suppose $w$ is a $2^-$-vertex in $(G - C )$.
The set $P \cup \{v_0\}$ induces a linear forest, and its neighbourhood is a subset of $N \cup \{v_1,w\}$. Furthermore, $N(w) \cap (P \cup \{v_0\}) = \{v_0\}$, $N(v_0) \cap V((G - C) - \{v_1\}) = \{w\}$, and $v_0$ and $w$ are $1^-$-vertices in $G[P \cup \{v_0\}]$ and $(G - C) - \{v_0\}$ respectively. By Observation~\ref{o2} applied to $L = P \cup \{v_0\}$ and $M = N \cup \{v_1\}$, we have $|P| + 1 < \frac{9}{11}(|C| + 2) - \frac{2}{11}(||C||+3)$. As $|P|  \ge \frac{9}{11}|C|  - \frac{2}{11}(||C||-3)$, it follows that $1 < \frac{9}{11}2 - \frac{2}{11}6$, a contradiction.

Now $w$ is a $3^+$-vertex in $(G - C)$. The set $P \cup \{v_0\}$ induces a linear forest, and its neighbourhood is a subset of $N \cup \{v_1,w\}$. By Observation~\ref{o1} applied to $L = P \cup \{v_0\}$ and $M = N \cup \{w\}$, we have $|P| + 1 < \frac{9}{11}(|C| + 3) - \frac{2}{11}(||C||+5)$. As $|P| \ge \frac{9}{11}|C|  - \frac{2}{11}(||C||-3)$, it follows that $1 < \frac{9}{11}3 -  \frac{2}{11}8$, a contradiction.
\end{itemize}
\vspace{-0.8 cm}
\end{proof}

\begin{lemm} \label{sepdchain}
For every separating double chain $C = (P,N,u_0,u_1,v_0,v_1)$ of $G$, $|P| < \frac{9}{11}|C| - \frac{2}{11}(||C||-1)$.
\end{lemm}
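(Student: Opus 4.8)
The plan is to argue by contradiction, exactly in the style of the proofs of Lemmas~\ref{schain} and~\ref{dchain}. Among all separating double chains $C=(P,N,u_0,u_1,v_0,v_1)$ violating the claimed inequality, that is with $|P|\ge\frac{9}{11}|C|-\frac{2}{11}(||C||-1)$, I would pick one maximizing $|C|$ and derive a contradiction. The first thing to record is that, since $C$ is separating, $v_0$ and $v_1$ lie in distinct components $K_0\ni v_0$ and $K_1\ni v_1$ of $G-C$; in particular $v_0$ and $v_1$ are non-adjacent, so the awkward ``$v_0,v_1$ adjacent'' branch that appears in the proof of Lemma~\ref{dchain} never arises. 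This is precisely what should let us reach the stronger constant $-1$ rather than $-3$: all surgery will be carried out on the $v_0$-side, and the separation guarantees that the $v_1$-side is left completely untouched.

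By symmetry between the two outer vertices I would do a case analysis on $d_0(C)$. In every case the idea is to absorb $v_0$ (and, when advantageous, some of its neighbours in $G-C$) either into the linear forest or into $N$, leaving $(\,\cdot\,,\,\cdot\,,u_1,v_1)$ as a simple chain or as a strictly larger separating double chain. Concretely: if $d_0(C)=0$ then $(P\cup\{v_0\},N,u_1,v_1)$ is a simple chain with $|C_0|=|C|+1$ and $||C_0||=||C||$, and Lemma~\ref{schain} already contradicts the assumption. If $d_0(C)=1$ with neighbour $w$, I would split on $\deg_{G-C}(w)$: when $w$ is a leaf of $G-C$ it can be swallowed into the forest, giving the simple chain $(P\cup\{v_0,w\},N,u_1,v_1)$; when $\deg_{G-C}(w)\ge 3$ the simple chain $(P\cup\{v_0\},N\cup\{w\},u_1,v_1)$ suffices; and the intermediate case $\deg_{G-C}(w)=2$, with second neighbour $x$, is handled by forming the separating double chain $(P\cup\{v_0,w\},N,w,u_1,x,v_1)$ and invoking maximality of $|C|$ when $x$ has low enough degree, or by pushing $x$ into $N$ otherwise. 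In each instance the point is that $v_0,w,x\in K_0$, so $v_1$ stays in $K_1$, remains a $2^-$-vertex with $N(v_1)\cap P=\{u_1\}$, and the new configuration is a genuine (separating) chain.

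The case $d_0(C)=2$, with the two neighbours $w_0,w_1$ of $v_0$ in $G-C$ (non-adjacent, since $G$ is triangle-free), is where the real work lies and is the main obstacle. Mirroring the corresponding part of Lemma~\ref{dchain}, I would split according to the degrees of $w_0$ and $w_1$ in $G-C$: a degree-$1$ neighbour is absorbed into the forest, two degree-$4$ neighbours are pushed into $N$, a mixed ``$3^-$ / $3^+$'' situation is resolved by promoting one neighbour to a new outer vertex, and the small-degree configurations are closed by building a \emph{strictly larger separating double chain} and applying maximality. The delicate bookkeeping is twofold. First, one must verify the sharp arithmetic in each subcase: the target constant $-1$ is tight, so several subcases do not close through a single simple-chain reduction and genuinely require the maximality step applied to a double chain of size $|C|+2$ or $|C|+3$. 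Second, and this is the crux, one must check that every extension preserves the \emph{separating} property — the newly created outer vertex must lie in (the remnant of) $K_0$, hence in a component of the reduced graph different from the one containing $v_1$. This is exactly where the hypothesis that $v_0$ and $v_1$ start in different components is used: all deletions happen inside $K_0$, so $K_1$ and the entire chain structure around $v_1$ are preserved verbatim, and the invariant ``the two outer vertices are separated'' is maintained throughout the argument.
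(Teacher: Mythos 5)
Your proposal is correct and takes essentially the same approach as the paper: a counterexample maximizing $|C|$, the observation that separation forces $v_0$ and $v_1$ to be non-adjacent (killing the adjacent branch of Lemma~\ref{dchain}), and a case analysis on $d_0(C)$ in which each configuration is reduced either to Lemma~\ref{schain} or, via maximality, to a strictly larger \emph{separating} double chain whose outer vertices stay separated because all surgery happens inside $v_0$'s component. Your only deviations are harmless: at $d_0(C)=1$ you split by degree $1$ / $2$ / $\ge 3$ where the paper splits $3^-$ / $4^+$ (your degree-$3$ push-into-$N$ branch still closes, tightly, yielding the false inequality $1<1$), and the $d_0(C)=2$ bookkeeping you defer is exactly the paper's subcase analysis (absorb a degree-$1$ neighbour, push two degree-$4$ neighbours into $N$, promote the $3^-$ neighbour in the mixed case, and handle the both-degree-$2$ case through the second neighbour $x$), all of which goes through with the constant $-1$.
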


\begin{proof}
Let us consider by contradiction a separating double chain $C = (P,N,u_0,u_1,v_0,v_1)$ such that $|P| \ge \frac{9}{11}|C| - \frac{2}{11}(||C||-1)$ maximizing $|C|$.

\begin{itemize}
\item Suppose $d_0(C) = 0$.  Then $(P \cup \{v_0\},N,u_1,v_1)$ is a simple chain of $G$. By Lemma~\ref{schain}, $|P|+1 < \frac{9}{11}(|C|+1) - \frac{2}{11}(||C||-\frac{1}{2})$. As $|P| \ge \frac{9}{11}|C| - \frac{2}{11}(||C||-1)$, we have $1 < \frac{9}{11} - \frac{2}{11}\frac{1}{2}$, a contradiction.

\item Suppose $d_0(C) = 1$. Let $w$ be the neighbour of $v_0$ in $G - C$. Suppose $w$ is a $4^+$-vertex in $G - C$. Then $(P \cup \{v_0\},N \cup \{w\},u_1,v_1)$ is a simple chain of $G$. By Lemma~\ref{schain}, $|P|+1 < \frac{9}{11}(|C|+2) - \frac{2}{11}(||C|| +\frac{7}{2})$. As $|P| \ge \frac{9}{11}|C|  - \frac{2}{11}(||C||-1)$, we have $1 < \frac{9}{11}2 -  \frac{2}{11}\frac{9}{2}$, a contradiction.

Now $w$ is a $3^-$-vertex in $G - C$. Let $C' = (P \cup\{v_0\}, N, v_0, u_1, w, v_1)$. One can see that $C'$ is a separating double chain of $G$, and by maximality of $|C|$, $|P| + 1 < \frac{9}{11}(|C| + 1) - \frac{2}{11}(||C||)$. As $|P| \ge \frac{9}{11}|C| - \frac{2}{11}(||C||- 1)$, we have $1 < \frac{9}{11} -  \frac{2}{11}$, a contradiction.

\item Suppose $d_0(C) = 2$. Let $w_0$ and $w_1$ be the neighbours of $v_0$ in $G - C$. 

\begin{itemize}
\item Suppose one of the $w_i$'s, say $w_0$, has degree $1$ in $G - C$.  We have a simple chain $(P \cup\{v_0,w_0\}, N \cup \{w_1\}, u_1, v_1)$. By Lemma~\ref{schain}, $|P|+2 < \frac{9}{11}(|C|+3) - \frac{2}{11}(||C|| + \frac{3}{2})$. As $|P| \ge \frac{9}{11}|C| - \frac{2}{11}(||C||- 1)$, we have $2 < \frac{9}{11}3 -\frac{2}{11} \frac{5}{2}$, a contradiction.

\item Suppose the $w_i$'s both have degree $2$ in $G - C$. Note that they are not adjacent since $G$ is triangle-free. Let $x$ be the second neighbour of $w_0$ in $G - C$. Suppose $x$ is a $4$-vertex in $G - C - \{w_1\}$. Then $(P \cup\{v_0,w_0\}, N \cup \{w_1,x\}, u_1, v_1)$ is a simple chain of $G$. By Lemma~\ref{schain}, $|P|+2 < \frac{9}{11}(|C|+4) - \frac{2}{11}(||C|| +\frac{13}{2})$. As $|P| \ge \frac{9}{11}|C| - \frac{2}{11}(||C||- 1)$, we have $2 < \frac{9}{11}4 - \frac{2}{11}\frac{15}{2}$, a contradiction. Now $x$ is a $3^-$-vertex in $G - C - \{w_1\}$, so $(P \cup\{v_0,w_0\}, N \cup \{w_1\}, w_0, u_1, x, v_1)$ is a separating double chain of $G$. By maximality of $|C|$, $|P| + 2 < \frac{9}{11}(|C| + 3) - \frac{2}{11}(||C|| + 3)$. As $|P| \ge \frac{9}{11}|C| - \frac{2}{11}(||C||- 1)$, we have $2 < \frac{9}{11}3 -  \frac{2}{11}4$, a contradiction.

\item Suppose the $w_i$'s have degree $4$ in $G - C$. Again they are not adjacent since $G$ is triangle-free. We have a simple chain $(P \cup\{v_0\}, N \cup \{w_0,w_1\}, u_1, v_1)$. By Lemma~\ref{schain}, $|P|+1 < \frac{9}{11}(|C|+3) - \frac{2}{11}(||C|| + \frac{15}{2})$. As $|P| \ge \frac{9}{11}|C| - \frac{2}{11}(||C||- 1)$, we have $1 < \frac{9}{11}3 - \frac{2}{11}\frac{17}{2}$, a contradiction. 

\item Suppose one of the $w_i$'s, say $w_0$, is a $3^-$-vertex in $G - C$ and the other one is a $3^+$-vertex in $G - C$. Then $(P \cup\{v_0\}, N \cup \{w_1\}, v_0, u_1, w_0, v_1)$ is a separating double chain. By maximality of $|C|$, $|P| + 1 < \frac{9}{11}(|C| + 2) - \frac{2}{11}(||C|| + 3)$. As $|P| \ge \frac{9}{11}|C| - \frac{2}{11}(||C||- 1)$, we have $1 < \frac{9}{11}2 -  \frac{2}{11}4$, a contradiction.
\end{itemize}
\end{itemize}
\vspace{-0.8 cm}
\end{proof}

Let us now prove some lemmas on the structure of $G$.

\begin{lemm} \label{degge3}
Graph $G$ has no $2^-$-vertex.
\end{lemm}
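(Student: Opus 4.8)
The plan is to assume, for contradiction, that $G$ contains a $2^-$-vertex, to fix a vertex $x$ of \emph{minimum} degree (so $\deg x \le 2$ and every vertex of $G$ has degree at least $\deg x$), and to rule out $\deg x \in \{0,1,2\}$ in turn. In each case I will exhibit either a chain of $G$ that violates Lemma~\ref{schain}, or a set $L$ inducing a linear forest whose deletion contradicts the inequality of Observation~\ref{o1}. The arithmetic is arranged so that the offending structure forces an impossible strict inequality such as $1 < \tfrac{8}{11}$ or $1 < 1$; triangle-freeness enters only to keep the deleted edge count small and to guarantee non-adjacencies.

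First, $\deg x = 0$ is immediate: an isolated vertex in a connected graph forces $n=1$, and a single vertex is not a counter-example. For $\deg x = 1$, let $w$ be its neighbour. If $\deg w \le 3$, then $(\{x\},\emptyset,x,w)$ is a chain (indeed $w$ is a $2^-$-vertex of $G-x$), with $|C|=\|C\|=1$, and Lemma~\ref{schain} yields $1 < \tfrac{9}{11} - \tfrac{2}{11}\cdot\tfrac12 = \tfrac{8}{11}$, a contradiction. If $\deg w = 4$, I instead apply Observation~\ref{o1} with $L=\{x\}$ and $M=N(x)=\{w\}$; exactly $4$ edges are deleted, giving $1 < \tfrac{9}{11}\cdot 2 - \tfrac{2}{11}\cdot 4 = \tfrac{10}{11}$, again impossible.

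For $\deg x = 2$, let $a,b$ be its two neighbours; minimality of $\deg x$ gives $\deg a,\deg b \ge 2$, and triangle-freeness gives $a \not\sim b$. I split on the degrees of $a$ and $b$. If, after possibly relabelling, $\deg a \le 3$ while $\deg b \ge 3$, then $(\{x\},\{b\},x,a)$ is a chain with $|C|=2$ and $\|C\| = \deg b + 1$, and Lemma~\ref{schain} forces $1 < \tfrac{17-2\deg b}{11} \le 1$, a contradiction; since $\deg a,\deg b\in\{2,3,4\}$, this disposes of every case except $\deg a=\deg b=4$ and $\deg a=\deg b=2$. When $\deg a=\deg b=4$, Observation~\ref{o1} with $L=\{x\}$, $M=\{a,b\}$ deletes $2+4+4-2=8$ edges and yields the tight $1 < \tfrac{27}{11} - \tfrac{16}{11} = 1$.

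The remaining, and hardest, case is $\deg a=\deg b=2$: here the one-vertex chains and the observation give only true inequalities, so I must lengthen the induced forest. Let $a'$ be the neighbour of $a$ other than $x$ (it exists since $\deg a=2$, differs from $x$ by definition, and differs from $b$ since $a\not\sim b$). Then $(\{a,x\},\{a'\},x,b)$ is a chain: $\{a,x\}$ induces a path with endpoint $x$, the only neighbour of $\{a,x\}$ outside $\{a'\}$ is the exit $b$, and $b$ has degree at most $1$ in $G-\{a',a,x\}$. Triangle-freeness forbids the edge $a'x$, so only $a'a$ and $ax$ lie inside the deleted set, whence $\|C\|=\deg a'+2$ with $|C|=3$; Lemma~\ref{schain} then forces $2 < \tfrac{24-2\deg a'}{11} \le \tfrac{20}{11}$, the final contradiction. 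The main obstacle is precisely this subcase: one must choose the chain long enough to beat the $\tfrac{9}{11}$ coefficient while keeping $\|C\|$ small, and bounding the internal edges via triangle-freeness is exactly what makes the inequality close.
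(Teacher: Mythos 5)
Your proof is correct, and it uses the same machinery and overall structure as the paper's: rule out a $0$-vertex by connectivity, a $1$-vertex via the one-vertex chain $(\{x\},\emptyset,x,w)$ and Lemma~\ref{schain} (or a deletion argument when the neighbour is a $4$-vertex; your use of Observation~\ref{o1} there is interchangeable with the paper's use of Observation~\ref{abg}), and a $2$-vertex via the chain $(\{x\},\{b\},x,a)$ or deletion of $\{x,a,b\}$ when both neighbours are $4$-vertices. All your chain verifications and arithmetic check out. The one genuine difference is worth highlighting: your finer split on $(\deg a,\deg b)$ is more careful than the published proof. The paper handles the entire case ``some neighbour is a $3^-$-vertex'' with the single chain $(\{u\},\{v_0\},u,v_1)$ and asserts $\|C\|=4$, which silently assumes that the neighbour placed in $N$ has degree at least $3$; when both neighbours of the $2$-vertex have degree $2$, one only gets $\|C\|=3$, the lemma yields $1<\tfrac{13}{11}$, and there is no contradiction. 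Your separate treatment of $\deg a=\deg b=2$ via the longer chain $(\{a,x\},\{a'\},x,b)$, with $\|C\|=\deg a'+2$ and $\deg a'\ge 2$ forced by your choice of $x$ as a minimum-degree vertex, closes exactly this subcase (an alternative fix is the double chain $(\{x\},\emptyset,x,x,a,b)$ together with Lemma~\ref{dchain}, which gives $1<1$). So your argument is not only correct but, in this one subcase, more complete than the paper's own proof.
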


\begin{proof}
As $G$ is connected, if it has a $0$-vertex, then $G$ is the graph with one vertex and it satisfies Theorem~\ref{main}, a contradiction.

By contradiction, suppose $u \in V$ is a $1$-vertex. Let $v$ be the neighbour of $u$. If $v$ is a $3^-$-vertex in $G$, then $(\{u\},\emptyset,u,v)$ is a chain of $G$, thus by Lemma~\ref{schain}, $1 < \frac{9}{11} - \frac{2}{11}(1 - \frac{1}{2})$, a contradiction.

Now $v$ is a $4$-vertex. Let $H^* = G - \{u,v\}$. Graph $H^*$ has $n-2$ vertices and $m-4$ edges. Adding vertex $u$ to any induced linear forest of $H^*$ leads to an induced linear forest of $G$. By Observation~\ref{abg} applied to $(\alpha, \beta, \gamma) = (2,4,1)$, $1 < \frac{9}{11}2 - \frac{2}{11}4$, a contradiction.

Therefore $G$ has no $1^-$-vertex.
Suppose now that $u \in V$ is a $2$-vertex.
Let $v_0$ and $v_1$ be the two neighbours of $u$. 

Suppose $v_0$ or $v_1$, say $v_1$, is a $3^-$-vertex. We have a simple chain $(\{u\}, \{v_0\}, u, v_1)$. By Lemma~\ref{schain}, $1 < \frac{9}{11}2 - \frac{2}{11}(4-\frac{1}{2})$, a contradiction.

Now $v_0$ and $v_1$ are $4$-vertices. Let $H^* = G - \{u,v_0,v_1\}$. Graph $H^*$ has $n-3$ vertices and $m-8$ edges. Adding vertex $u$ to any induced linear forest of $H^*$ leads to an induced linear forest of $G$. By Observation~\ref{abg} applied to $(\alpha, \beta, \gamma) = (3,8,1)$, $1 < \frac{9}{11}3 - \frac{2}{11}8$, a contradiction.
\end{proof}

\begin{lemm} \label{3-344}
Graph $G$ has no $3$-vertex adjacent to another $3$-vertex and two $4$-vertices.
\end{lemm}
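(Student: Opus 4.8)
The plan is to assume, for contradiction, that $G$ contains such a configuration: a $3$-vertex $u$ whose three neighbours are a $3$-vertex $w$ and two $4$-vertices $a$ and $b$. By Lemma~\ref{degge3} every vertex of $G$ has degree $3$ or $4$, so these three vertices form the entire neighbourhood of $u$. The idea is to encode this local structure as a simple chain and contradict Lemma~\ref{schain}, in the same spirit as the preceding degree arguments. Since $u$ alone induces a (trivial) linear forest while $a$ and $b$ are the high-degree neighbours, the natural choice is to take $P = \{u\}$, place the two $4$-vertices into $N$, and let the remaining neighbour $w$ serve as the exit vertex $v$.

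Concretely, I would consider $C = (\{u\}, \{a,b\}, u, w)$ and verify the five defining conditions of a chain. Most are immediate: $G[\{u\}]$ is a linear forest, $u$ is a $0$-vertex of it, $N(w) \cap \{u\} = \{u\}$ because $w$ is adjacent to $u$, and $N(\{u\}) = \{w,a,b\} \subseteq \{a,b\} \cup \{w\}$. The only condition that uses the hypotheses is that $v = w$ be a $2^-$-vertex of $G - C = G - \{u,a,b\}$; this is exactly where the assumption that $w$ is a $3$-vertex enters, since $w$ is adjacent to $u$ and therefore loses at least one incident edge when $\{u,a,b\}$ is deleted, leaving it with degree at most $2$.

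The resulting computation is short. We have $|C| = |P| + |N| = 3$, and since $\deg(u) = 3$, $\deg(a) = \deg(b) = 4$, while the only edges inside $\{u,a,b\}$ are $ua$ and $ub$ (there is no edge $ab$, as $G$ is triangle-free), we obtain $||C|| = 3 + 4 + 4 - 2 = 9$. Substituting into Lemma~\ref{schain} gives $1 = |P| < \frac{9}{11} \cdot 3 - \frac{2}{11}(9 - \frac{1}{2}) = \frac{10}{11}$, which is absurd; this contradiction rules out the configuration.

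I expect the crux to be the single verification that $w$ is a $2^-$-vertex after deletion, i.e.\ that $C$ is genuinely a chain. This is the only place the two degree hypotheses are truly needed: that $w$ has degree $3$ makes the chain valid, and that $a$ and $b$ have degree $4$ forces $||C||$ up to $9$, which is precisely what makes the bound of Lemma~\ref{schain} fail (degree $3$ for one of them would yield $||C|| = 8$ and no contradiction). Everything else is routine, and no case analysis on the adjacencies among $w$, $a$, and $b$ is required, since those affect neither $|C|$ nor $||C||$.
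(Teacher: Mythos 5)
Your proof is correct and is essentially the paper's own argument: the paper also builds the simple chain $(\{u\},\{w_0,w_1\},u,v)$ (your $(\{u\},\{a,b\},u,w)$) and derives the same contradiction $1 < \frac{9}{11}\cdot 3 - \frac{2}{11}\left(9 - \frac{1}{2}\right)$ from Lemma~\ref{schain}. Your verification of the chain conditions and of $\|C\| = 9$ (including the triangle-freeness observation) matches what the paper leaves implicit.
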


\begin{proof}
By contradiction, suppose $G$ has a $3$-vertex $u$, adjacent to a $3$-vertex $v$ and two $4$-vertices $w_0$ and $w_1$. We have a simple chain $(\{u\},\{w_0,w_1\},u,v)$. By Lemma~\ref{schain}, $1 < \frac{9}{11}3 - \frac{2}{11}(9 - \frac{1}{2})$, a contradiction.
\end{proof}

\begin{lemm}\label{3-334}
Graph $G$ has no $3$-vertex adjacent to two other $3$-vertices and a $4$-vertex.
\end{lemm}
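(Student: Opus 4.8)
The plan is to argue by contradiction, so suppose $G$ contains a $3$-vertex $u$ adjacent to two $3$-vertices $v_0,v_1$ and a $4$-vertex $w$. First I would extract the rigidity forced by triangle-freeness: since $u$ is adjacent to each of $v_0,v_1,w$, no two of these can be adjacent without closing a triangle through $u$, so $v_0\not\sim v_1$ and $v_0,v_1\not\sim w$. In particular, in $G-\{u,w\}$ each of $v_0,v_1$ retains both of its neighbours other than $u$, hence both are $2$-vertices there. These facts are what make the natural chain construction legal.

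The natural object is the double chain $C=(\{u\},\{w\},u,u,v_0,v_1)$. One checks the definition directly: $\{u\}$ induces a linear forest, $N(u)=\{v_0,v_1,w\}\subseteq N\cup\{v_0,v_1\}$, and $v_0,v_1$ are $2^-$-vertices of $G-\{u,w\}$, so $C$ is valid with $|P|=1$, $|C|=2$, and $||C||=3+4-1=6$. If $v_0$ and $v_1$ lie in distinct components of $G-\{u,w\}$, then $C$ is a separating double chain, and Lemma~\ref{sepdchain} gives $1<\frac{9}{11}\cdot 2-\frac{2}{11}(6-1)=\frac{8}{11}$, a contradiction. The point worth stressing is that the generic bound of Lemma~\ref{dchain} is too weak here (it only yields $1<\frac{12}{11}$); the separating notion was introduced precisely to handle this configuration.

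The remaining case, where $v_0$ and $v_1$ are joined by a path in $G-\{u,w\}$, is the main obstacle, and I would attack it by absorbing $v_0$ into the linear-forest part. By Lemma~\ref{3-344} the two non-$u$ neighbours $a_0,b_0$ of $v_0$ are not both $4$-vertices, so at least one, say $a_0$, is a $3$-vertex; I would then form $C'=(\{u,v_0\},\{w,b_0\},u,v_0,v_1,a_0)$, using the $3$-vertex $a_0$ as a new free end (so that it is a $2^-$-vertex of $G-\{u,v_0,w,b_0\}$) and placing $b_0$ into $N$. All the membership and degree conditions hold because $a_0,b_0\notin\{u,v_1,w\}$ (again by triangle-freeness). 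Now $|P|=2$, $|C'|=4$, and $||C'||=7+d(b_0)-[w\sim b_0]\ge 9$, so Lemma~\ref{sepdchain} closes the separating case outright, while in the non-separating case Lemma~\ref{dchain} closes it as soon as $||C'||\ge 10$, i.e.\ whenever $b_0$ is a $4$-vertex or $b_0\not\sim w$.

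The delicate residual situation is when every admissible choice is forced to leave in $N$ a $3$-vertex adjacent to $w$, symmetrically from both $v_0$ and $v_1$. I expect this to be the genuinely hard step. Since $w$ is a $4$-vertex with $w\sim u$, it has only three further neighbours, so $a_0,b_0,a_1,b_1$ cannot be four distinct vertices; forcing them all to be $3$-vertices adjacent to $w$ collapses the picture onto an explicit dense subgraph — a common neighbour $x$ of $v_0$ and $v_1$ that is itself a $3$-vertex adjacent to two $3$-vertices and to $w$ (a second copy of the forbidden configuration), or a $K_{2,3}$ between $\{v_0,v_1\}$ and $\{u,x,y\}$ together with $w$. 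I would close this final case either by choosing the offending configuration so as to route a separating double chain through $x$, or by exhibiting one tailored chain over this small fixed subgraph; pinning down that last chain is where the real work lies.
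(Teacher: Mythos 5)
Your reductions are correct as far as they go: the double chain $(\{u\},\{w\},u,u,v_0,v_1)$ together with the observation that Lemma~\ref{sepdchain} kills the separating case while Lemma~\ref{dchain} does not, and then the upgraded double chain $C'=(\{u,v_0\},\{w,b_0\},u,v_0,v_1,a_0)$ with $||C'||=7+d(b_0)-[w\sim b_0]$, all check out arithmetically, and they reduce the lemma to exactly the residual configuration the paper also faces: both neighbours of $v_0$ other than $u$ are $3$-vertices adjacent to $w$ (in the paper's notation, $x_0,x_1$ are $1$-vertices in $G'=G-\{u,w,v_0\}$). But there you stop; ``pinning down that last chain is where the real work lies'' is an admission that the proof is unfinished, and the symmetric analysis through $v_1$ that you sketch cannot finish it. Of your two terminal pictures, the one where $v_0$ and $v_1$ share both extra neighbours is actually impossible outright --- it forces a $K_{3,3}$ with parts $\{u,a_0,b_0\}$ and $\{v_0,v_1,w\}$, contradicting planarity, a fact you could have exploited but did not state --- while the one producing ``a second copy of the forbidden configuration'' yields no contradiction at all: everything lives inside the single minimal counterexample $G$, so exhibiting another copy of the configuration starts no descent and proves nothing. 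That last case is therefore a genuine gap, not a routine verification.

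The missing idea is to grow the chain \emph{deeper} through one of the two $3$-vertices rather than to build another double chain around $v_0,v_1$. In the paper: since both $x_0,x_1$ are adjacent to $v_0$ and $w$, planarity (no $K_{3,3}$) gives one of them, say $x_0$, not adjacent to $v_1$; its third neighbour $y$ is then a $3$-vertex, because $x_0$ is a $3$-vertex already adjacent to the $3$-vertex $v_0$ and the $4$-vertex $w$, so a $4$-vertex $y$ would contradict Lemma~\ref{3-344} (and $y$ is a $3^+$-vertex by Lemma~\ref{degge3}); finally $(\{u,v_0,x_0\},\{w,v_1,x_1\},x_0,y)$ is a simple chain with $|P|=3$, $|C|=6$ and $||C||\ge 11$, so Lemma~\ref{schain} forces $3<\frac{9}{11}\cdot 6-\frac{2}{11}\left(11-\frac{1}{2}\right)=3$, a contradiction. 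So your reduction is sound but strictly weaker than what is needed: the lemma stands or falls on this planarity-plus-deeper-chain step, which is precisely the part your proposal leaves open.
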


\begin{proof}
Let $u$ be a $3$-vertex adjacent to two $3$-vertices $v_0$ and $v_1$, and to a $4$-vertex $w$. Let $x_0$ and $x_1$ be the two neighbours of $v_0$ distinct from $u$. Note that $x_0$ and $x_1$ are $3^+$-vertices in $G$ by Lemma~\ref{degge3}, and thus $1^+$-vertices in $G' = G - \{u,w,v_0\}$ since they are not adjacent to $u$. Note that $x_0$ and $x_1$ may be adjacent to $w$.

Suppose that $x_0$ and $x_1$ are $2^+$-vertices in $G'$, or that one is a $3$-vertex and the other a $1^+$-vertex. We have a simple chain $(\{u,v_0\}, \{x_0,x_1,w\}, u, v_1)$ in $G$. By Lemma~\ref{schain}, $2 < \frac{9}{11}5 - \frac{2}{11}(12 - \frac{1}{2})$, a contradiction.

Suppose one of the $x_i$'s, say $x_0$, is a $2^+$-vertex in $G'$, and the other one is a $1$-vertex in $G'$. We have a double chain $(\{u,v_0\},\{w,x_0\},u,v_0,v_1,x_1)$. By Lemma~\ref{dchain}, $2 < \frac{9}{11}4 - \frac{2}{11}(10 - 3)$, a contradiction.

Now the $x_i$'s are $1$-vertices in $G'$. By Lemma~\ref{degge3}, the $x_i$'s are $3$-vertices in $G$, and thus are both adjacent to $w$. By planarity of $G$, one of the $x_i$'s, say $x_0$, is not adjacent to $v_1$. Let $y$ be the neighbour of $x_0$ in $G'$. By Lemmas~\ref{degge3} and~\ref{3-344}, $y$ is a $3$-vertex in $G$. We have a simple chain $(\{u,v_0,x_0\},\{w,v_1,x_1\},x_0,y)$. By Lemma~\ref{schain}, $3 < \frac{9}{11}6 - \frac{2}{11}(11 - \frac{1}{2}$, a contradiction.
\end{proof}

\begin{lemm}\label{3-3}
Graph $G$ has no two adjacent $3$-vertices.
\end{lemm}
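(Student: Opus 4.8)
The plan is to assume, for contradiction, that $u$ and $v$ are adjacent $3$-vertices and first squeeze out strong degree information from the configurations already excluded. By Lemma~\ref{degge3} every vertex has degree $3$ or $4$. If a $3$-vertex has a neighbour of degree $3$, then its two remaining neighbours cannot include a $4$-vertex: a neighbourhood of degrees $(3,3,4)$ is forbidden by Lemma~\ref{3-334} and one of degrees $(3,4,4)$ by Lemma~\ref{3-344}. Hence every $3$-vertex with a $3$-neighbour has all three neighbours of degree $3$. Applying this to $u$ (whose $3$-neighbour is $v$) and propagating along the connected graph, I would conclude that $G$ is cubic. In particular, writing $a_1,a_2$ for the neighbours of $u$ other than $v$ and $b_1,b_2$ for those of $v$ other than $u$, all of $a_1,a_2,b_1,b_2$ are $3$-vertices, and by triangle-freeness they are pairwise distinct and distinct from $u,v$, with $a_i\not\sim v$ and $b_i\not\sim u$.

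Next I would set up chains on this cluster. The set $P=\{u,v\}$ induces a path, and choosing any $v_0\in\{a_1,a_2\}$ and $v_1\in\{b_1,b_2\}$ as exits (attached to $u$ and $v$ respectively), with $N$ the remaining two neighbours, gives a double chain $(P,N,u,v,v_0,v_1)$. If for some choice this double chain is separating, then Lemma~\ref{sepdchain} applies: here $|P|=2$, $|N|=2$, and $||C||=9$ (the latter when the two vertices of $N$ are non-adjacent), which would force $2<\frac{20}{11}$, a contradiction. So the first reduction is that some choice of exits separates the two exit vertices, or else all four outer neighbours remain pairwise linked in $G-\{u,v\}$ after deleting any two of them.

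The heart of the argument is this remaining, highly connected case, where I would switch to a simple chain built from a longer induced path, since Lemma~\ref{schain} carries the most favourable constant. For an induced path $P$ on $k$ cubic vertices, serving as the body of a chain with exterior set $N$ and a single exit, one computes $||C||=3(|P|+|N|)-e(N\cup P)$; since $P$ contributes $k-1$ internal edges and $k+1$ edges towards $N$, the failure of Lemma~\ref{schain} would require $4k-3|N|-2e(N)\ge 1$, where $e(N)$ is the number of edges inside $N$ and $|N|\le k+1$. The path $a_1-u-v-b_1$ (induced when $a_1\not\sim b_1$) has $k=4$, its outer neighbours being $a_2$, $b_2$, and the further neighbours of $a_1$ and $b_1$; designating one further neighbour of $a_1$ as the exit leaves $|N|=5$, and if these five vertices are pairwise non-adjacent then $4\cdot 4-3\cdot 5-0=1$, so the chain exists and contradicts Lemma~\ref{schain}.

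The main obstacle is exactly the tightness of this count: every degenerate adjacency erodes the margin, so the bulk of the work will be to rule out, or exploit, the bad local pictures. These are $a_1\sim b_1$ (the $P_4$ is not induced), an outer neighbour adjacent to two vertices of the path (so it cannot serve as the exit), and two outer neighbours adjacent (so $e(N)>0$). I would use triangle-freeness to forbid the short chords and planarity to bound how densely the outer neighbours can interconnect around the cluster. When none of the favourable chains is available, the local structure is forced to be small and symmetric — for instance $a_2\sim b_2$, creating the $4$-cycle $u\,a_2\,b_2\,v$ — and I would dispose of each such picture either by re-routing the induced path through the $4$-cycle (e.g.\ $a_1-u-a_2-b_2$) and reapplying Lemma~\ref{schain}, or by exhibiting a separating double chain among the second-layer neighbours and invoking Lemma~\ref{sepdchain}.
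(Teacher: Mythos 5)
Your first step is exactly the paper's: by Lemma~\ref{degge3} every vertex has degree $3$ or $4$, Lemmas~\ref{3-344} and~\ref{3-334} forbid a $3$-vertex from having both a $3$-neighbour and a $4$-neighbour, and connectivity then makes $G$ cubic. Your counting is also sound: for a chain built on an induced path $P$ with exterior set $N$ in a cubic graph, Lemma~\ref{schain} is contradicted exactly when $4|P| \ge 3|N| + 2e(N) + 1$, which for $|P|=4$, $|N|=5$, $e(N)=0$ holds with zero margin. But that zero margin is precisely where your proposal stops being a proof. Every edge inside $N$ costs $2$ and is nothing but a $4$-, $5$- or $6$-cycle through the neighbourhood of $uv$, and handling these configurations is the entire content of the lemma; you leave it as a plan (``I would use triangle-freeness\dots'', ``I would dispose of each such picture\dots''). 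Concretely: if both $c_1d_1$ and $c_2d_2$ are edges (two hexagons through $uv$ --- a perfectly planar, triangle-free, cubic picture), then all four chains based on the path $a_1uvb_1$ fail, whatever exit you choose. One can recover by switching to the path $c_1a_1uv$ with exit $b_1$, but this again succeeds only with zero margin ($4<4$), and one further second-layer edge (say between $b_2$ and the third neighbour of $c_1$, yet another $6$-cycle) kills that chain as well. This chase has no visible termination: a cubic triangle-free planar graph can look locally like the hexagonal lattice, and no bounded neighbourhood of a single edge $uv$ yields a contradiction by counting alone --- such configurations are only impossible \emph{globally}.

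That global ingredient is what your outline is missing, and it is how the paper actually finishes. Once $G$ is known to be cubic, the paper argues by girth, anywhere in the graph rather than around the fixed edge $uv$: (i) if $G$ contains a $4$-cycle $u_0u_1u_2u_3$ with third neighbours $v_i$, then either $v_0=v_2$ and $v_1=v_3$, in which case Observation~\ref{abg} with $(\alpha,\beta,\gamma)=(4,8,2)$ applies because, by planarity, the cycle separates $v_0$ from $v_1$ so that $u_0,u_1$ can be added to any forest of $H^*$; or, say, $v_0\ne v_2$, and the double chain $(\{u_0,u_1,u_2\},\{u_3,v_1\},u_0,u_2,v_0,v_2)$ contradicts Lemma~\ref{dchain}; (ii) if $G$ has a $5$-cycle but no $4$-cycle, a double chain on four consecutive vertices of that $5$-cycle contradicts Lemma~\ref{dchain}; (iii) otherwise $G$ is a cubic planar graph of girth at least $6$, which is impossible by Euler's formula. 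Your separating-double-chain dichotomy and your generic chain are correct but cannot substitute for step (iii): to close the argument you must either import this girth case analysis with its final Euler-formula contradiction, or supply some other global counting argument; the local case-chasing you describe will not close on its own.
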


\begin{proof}
By Lemma~\ref{degge3}, every vertex in $G$ has degree $3$ or $4$. By Lemmas~\ref{3-344} and \ref{3-334}, there is no $3$-vertex adjacent to a $3$-vertex and a $4$-vertex in $G$.
Suppose by contradiction that there are two adjacent $3$-vertices in $G$. Then as $G$ is connected, $G$ only has $3$-vertices.

Suppose there is a $4$-cycle $u_0u_1u_2u_3$ in $G$. For all $i$, let $v_i$ be the third neighbour of $u_i$. Since $G$ has no triangle, the only vertices among the $u_i$'s and $v_i$'s that may not be distinct are $v_0$ and $v_2$ on the one hand, and $v_1$ and $v_3$ on the other hand. Suppose $v_0 = v_2$ and $v_1 = v_3$. Let $H^* = G - \{u_0,u_1,u_2,u_3\}$. Graph $H^*$ has $n-4$ vertices and $m-8$ edges. As $v_0$ and $v_1$ are $1$-vertices in $H^*$, separated by $u_0u_1u_2u_3$ in $G$, adding vertices $u_0$ and $u_1$ to any induced linear forest of $H^*$ leads to an induced linear forest of $G$. By Observation~\ref{abg} applied to $(\alpha, \beta, \gamma) = (4,8,2)$, we have $2 < \frac{9}{11}4 - \frac{2}{11}8$, a contradiction. Now w.l.o.g. $v_0$ and $v_2$ are distinct. We have a double chain $(\{u_0,u_1,u_2\}, \{u_3,v_1\}, u_0, u_2, v_0, v_2)$. By Lemma~\ref{dchain}, $3 < \frac{9}{11}5 - \frac{2}{11}(9 - 3)$, a contradiction.

Now there is no $4$-cycle in $G$. Suppose there is a $5$-cycle $u_0u_1u_2u_3u_4$ in $G$. For all $i$, let $v_i$ be the third neighbour of $u_i$. Now all the $v_i$'s are distinct, otherwise there is a $4$-cycle and we fall into the previous case. We have a double chain $(\{u_0,u_1,u_2,u_3\}, \{u_4,v_1,v_2\}, u_0, u_3, v_0, v_3)$. By Lemma~\ref{dchain}, $4 < \frac{9}{11}7 - \frac{2}{11}(14 - 3)$, a contradiction.

Now $G$ is a $3$-regular planar graph with girth at least $6$, which contradicts Euler's formula.
\end{proof}

\begin{lemm} \label{4433}
There is no $4$-cycle with at least two $3$-vertices in $G$.
\end{lemm}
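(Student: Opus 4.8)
The plan is to argue by contradiction, so suppose $G$ contains a $4$-cycle $u_0u_1u_2u_3$ with at least two $3$-vertices. First I would pin down the forced local structure. By Lemma~\ref{3-3} no two $3$-vertices are adjacent, so the two $3$-vertices of the cycle cannot be consecutive; hence they are opposite, say $u_0$ and $u_2$, and the remaining cycle vertices $u_1,u_3$ are $4$-vertices. Again by Lemma~\ref{3-3}, every neighbour of a $3$-vertex is a $4$-vertex, so the third neighbours $x_0$ of $u_0$ and $x_2$ of $u_2$ (those off the cycle) are also $4$-vertices. Triangle-freeness gives $x_0\not\sim u_1$, $x_0\not\sim u_3$, and $x_0\neq u_2$ (as $u_0\not\sim u_2$), and symmetrically for $x_2$; moreover $x_0\sim u_2$ would force $x_0\in N(u_2)=\{u_1,u_3,x_2\}$, i.e. $x_0=x_2$. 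I would therefore split on whether $u_0$ and $u_2$ share a third common neighbour.

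If $x_0=x_2=:x$, then $\{u_0,u_2\}$ and $\{u_1,u_3,x\}$ form a $K_{2,3}$. I would take $L=\{u_0,u_2\}$, which induces an edgeless linear forest, and $M=\{u_1,u_3,x\}=N(L)$, and apply Observation~\ref{o1}. The number of edges incident to $M\cup L$ is $3+3+4+4+4-6=12$, so the observation yields $2=|L|<\frac{9}{11}\cdot5-\frac{2}{11}\cdot12=\frac{21}{11}$, a contradiction. This disposes of the case where $u_0,u_2$ have a third common neighbour.

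The remaining case $x_0\neq x_2$ is the crux. Here $x_0\not\sim u_2$, $x_2\not\sim u_0$, and $N(u_0)\cup N(u_2)=\{u_1,u_3,x_0,x_2\}$, so the natural object is the double chain $(\{u_0,u_2\},\{u_1,u_3\},u_0,u_2,x_0,x_2)$: the set $P=\{u_0,u_2\}$ induces a linear forest, $N(P)\subseteq\{u_1,u_3\}\cup\{x_0,x_2\}$, and $||C||=(3+3+4+4)-4=10$. Note that the plain bound of Lemma~\ref{dchain} only gives $2<\frac{9}{11}\cdot4-\frac{2}{11}\cdot7=2$, which is not a contradiction, whereas the separating bound of Lemma~\ref{sepdchain} would give $2<\frac{9}{11}\cdot4-\frac{2}{11}\cdot9=\frac{18}{11}$, a contradiction. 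Hence the whole case hinges on producing a \emph{separating} double chain.

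The main obstacle is twofold. First, the displayed quadruple is not yet a valid double chain: since $x_0$ and $x_2$ are $4$-vertices adjacent to the cycle only at $u_0$ and $u_2$ respectively, each has degree $3$ — not $\le2$ — in $G-\{u_0,u_1,u_2,u_3\}$, so the condition $d_0(C),d_1(C)\le2$ fails. I would repair this exactly as in the proofs of Lemmas~\ref{schain}--\ref{sepdchain}: examine the forward neighbours of $x_0$ and of $x_2$ in $G-\{u_0,u_1,u_2,u_3\}$ and move one forward neighbour of each into $N$ (with sub-cases according to whether those forward neighbours are $3^-$- or $4$-vertices, and using maximality of $|C|$ to grow the chain in the borderline configurations), thereby turning $x_0,x_2$ into legitimate $2^-$-exits. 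Second, I would invoke planarity to certify separation: the cycle $u_0u_1u_2u_3$ is a Jordan curve, so when $x_0$ and $x_2$ lie on opposite sides of it they fall into distinct components of $G-\{u_0,u_1,u_2,u_3\}$, making the chain separating. I expect the genuinely delicate points to be the bookkeeping of $||C||$ through these sub-cases and the configuration in which $x_0,x_2$ lie on the same side of the cycle — so that $u_0,u_1,u_2,u_3$ bounds an empty face — where a separating chain is unavailable and a dedicated short chain argument on that side will be required.
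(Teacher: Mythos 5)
Your reduction of the configuration (the two $3$-vertices must be opposite, their third neighbours $x_0,x_2$ are $4$-vertices) and your treatment of the case $x_0=x_2$ are correct and match the paper, down to the arithmetic $2<\frac{9}{11}\cdot5-\frac{2}{11}\cdot12=\frac{21}{11}$. But in the case $x_0\neq x_2$, which you yourself call the crux, you do not give a proof: you give a plan with two acknowledged holes (the bookkeeping of $\|C\|$ through the repair sub-cases, and the configuration where $x_0,x_2$ lie on the same side of the cycle), and the plan itself would fail as described. Separation is not something planarity can certify here: $x_0$ and $x_2$ may be adjacent, or joined by a path avoiding the cycle, and nothing forces them onto opposite sides of the Jordan curve; the "dedicated short chain argument" you defer to is precisely the missing content. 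There is also a logical slip worth noting: Lemma~\ref{dchain} asserts a \emph{strict} inequality for every valid double chain, so exhibiting one with $|P|=2$ equal to the bound $\frac{9}{11}\cdot4-\frac{2}{11}\cdot7=2$ would already be a contradiction; the real (and only) reason the plain bound is unusable is the one you correctly identify, namely that $x_0,x_2$ have degree $3$ in $G-\{u_0,u_1,u_2,u_3\}$, so the quadruple is not a double chain at all.

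The paper's resolution of this case needs no separating double chain and no planarity: it splits on whether $x_0x_2\in E$. If $x_0x_2\in E$, then $(\{u_0,u_2\},\{u_1,u_3,x_2\},u_0,x_0)$ \emph{is} a valid simple chain (once $x_2$ is in $N$, the vertex $x_0$ loses two of its four neighbours and becomes a $2^-$-vertex in $G-C$), and Lemma~\ref{schain} gives $2<\frac{9}{11}\cdot5-\frac{2}{11}\left(13-\frac{1}{2}\right)=\frac{20}{11}$, a contradiction. If $x_0x_2\notin E$, delete all six vertices $\{u_0,u_1,u_2,u_3,x_0,x_2\}$ (exactly $16$ edges go with them, by triangle-freeness) and add $u_0,u_2$ back as isolated vertices to any induced linear forest of the remaining graph; Observation~\ref{abg} with $(\alpha,\beta,\gamma)=(6,16,2)$ gives $2<\frac{9}{11}\cdot6-\frac{2}{11}\cdot16=2$, a contradiction. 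So the missing idea is simply to also delete the two third neighbours and exploit that $u_0,u_2$ then become isolated — the same device you used successfully in the $x_0=x_2$ case.
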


\begin{proof}
By contradiction, suppose there is such a $4$-cycle $u_0u_1u_2u_3$. By Lemmas~\ref{degge3} and~\ref{3-3}, this cycle has exactly two $3$-vertices and two $4$-vertices, and the two $3$-vertices are not adjacent. W.l.o.g. $u_0$ and $u_2$ are $3$-vertices, and $u_1$ and $u_3$ are $4$-vertices.
Let $v_0$ and $v_2$ be the third neighbours of $u_0$ and $u_2$ respectively. By Lemma~\ref{3-3}, $v_0$ and $v_2$ are $4$-vertices.

Suppose that $u_0$ and $u_2$ have three neighbours in common, $u_1$, $u_3$, and $v = v_0 = v_2$. Let $H^* = G - \{u_0,u_1,u_2,u_3,v\}$. Graph $H^*$ has $n-5$ vertices and $m-12$ edges. Adding vertices $u_0$ and $u_2$ to any induced linear forest of $H^*$ leads to an induced linear forest of $G$. By Observation~\ref{abg} applied to $(\alpha, \beta, \gamma) = (5,12,2)$, $2 < \frac{9}{11}5 - \frac{2}{11}12$, a contradiction.

Now $v_0$ and $v_2$ are distinct. Suppose that $v_0v_2 \in E$. We have a chain \linebreak $(\{u_0,u_2\},\{u_1,u_3,v_2\},u_0,v_0)$. By Lemma~\ref{schain}, $2 < \frac{9}{11}5 - \frac{2}{11}(13 - \frac{1}{2})$, a contradiction.

Now $v_0v_2 \notin E$. Let $H^* = G - \{u_0,u_1,u_2,u_3,v_0,v_2\}$. Graph $H^*$ has $n-6$ vertices and $m-16$ edges. Adding vertices $u_0$ and $u_2$ to any induced linear forest of $H^*$ leads to an induced linear forest of $G$. By Observation~\ref{abg} applied to $(\alpha, \beta, \gamma) = (6,16,2)$, $2 < \frac{9}{11}6 - \frac{2}{11}16$, a contradiction.
\end{proof}

\begin{lemm} \label{4443}
There is no $4$-face with exactly one $3$-vertex in $G$.
\end{lemm}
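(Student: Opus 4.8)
The plan is to reduce to a single $3$-vertex on the face and then expose an efficient \emph{chain} whose tail is the opposite face-vertex. Suppose for contradiction that $G$ has a $4$-face $u_0u_1u_2u_3$ with exactly one $3$-vertex. By Lemma~\ref{4433} a $4$-cycle carries at most one $3$-vertex, so without loss of generality $u_0$ is the unique $3$-vertex and $u_1,u_2,u_3$ are $4$-vertices. Let $v_0$ be the third neighbour of $u_0$; by Lemma~\ref{3-3} it is a $4$-vertex, and since $G$ is triangle-free we have $u_0\not\sim u_2$, $u_1\not\sim u_3$, $v_0\notin\{u_1,u_2,u_3\}$, and $v_0\not\sim u_1,u_3$ (each such edge would close a triangle through $u_0$).

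The key construction is the following. The set $P=\{u_0,u_1\}$ induces the edge $u_0u_1$, a linear forest in which $u_1$ is a $1$-vertex. Writing $a_1,b_1$ for the two neighbours of $u_1$ off the face, the external neighbourhood of $P$ is $\{u_2,u_3,v_0,a_1,b_1\}$, and the opposite face-vertex $u_2$ attaches to $P$ only at $u_1$ (as $u_0\not\sim u_2$). Crucially $u_2$ is adjacent to both $u_1\in P$ and $u_3$, so taking $N=\{u_3,v_0,a_1,b_1\}$ makes $u_2$ a $2^-$-vertex of $G-(N\cup P)$: the quadruplet $C=(P,N,u_1,u_2)$ is a chain. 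I would then check that the only possible edges inside $N\cup P$ are $u_0u_1,u_0u_3,u_0v_0,u_1a_1,u_1b_1$ together with possibly $u_3a_1,u_3b_1,v_0a_1,v_0b_1$ (triangle-freeness rules out all others, e.g. $a_1b_1$, $u_1u_3$, $v_0u_1$). Hence $|P|=2$, $|C|=6$, and $||C||=15+\deg a_1+\deg b_1-e$, where $e\in\{5,6,7,8,9\}$ counts those edges. When $a_1,b_1$ are $4$-vertices with no extra incidences we get $||C||=18$, and Lemma~\ref{schain} would force $2<\frac{9}{11}6-\frac{2}{11}(18-\frac12)=\frac{19}{11}$, a contradiction; the same computation survives whenever $\deg a_1+\deg b_1-e\ge 2$, in particular whenever at least one of $a_1,b_1$ is a $4$-vertex and there are few chords. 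The chain built symmetrically from $\{u_0,u_3\}$ with tail $u_2$ handles the mirror situation at $u_3$.

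The remaining, and hardest, case is when this counting just fails at both ends: when $u_1$ and $u_3$ each have two off-face neighbours that are $3$-vertices (or enough chords $u_3a_1,v_0a_1,\dots$ are present to drop $||C||$ below $17$). Here I would push one step further into the graph: by Lemma~\ref{3-3} every neighbour of a $3$-vertex is a $4$-vertex, so the $3$-vertices $a_1,b_1$ have only $4$-vertex neighbours, which I would incorporate either as chain tails that fall to degree $2$ after losing their single edge to $P$, or into $N$ to restore the edge count. Depending on whether two new tails land in the same or in different components of the reduced graph, I would conclude with Lemma~\ref{dchain} or the stronger Lemma~\ref{sepdchain}, falling back on a direct application of Observation~\ref{abg} when vertices coincide. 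The main obstacle is precisely this bookkeeping: any $4$-vertex one tries to dangle as a chain tail loses only its single edge to $P$ and so stays at degree $3$, whereas the opposite face-vertex $u_2$ is the unique vertex guaranteed to lose two edges. Making the argument go through in the all-$3$-vertex case therefore hinges on choosing carefully which low-degree vertices to dangle and which to absorb into $N$, and on tracking the resulting $|C|$ and $||C||$ so that the relevant chain inequality is violated by the required margin.
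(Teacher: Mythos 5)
Your opening construction is correct and efficient: with $P=\{u_0,u_1\}$, $N=\{u_3,v_0,a_1,b_1\}$, attachment vertex $u_1$ and tail $u_2$, the quadruplet $(P,N,u_1,u_2)$ is a legitimate chain (triangle-freeness gives $u_0\not\sim u_2$, and $u_2$ loses its edges to $u_1$ and $u_3$, so it is a $2^-$-vertex in $G-(N\cup P)$), and your arithmetic is right: Lemma~\ref{schain} yields a contradiction exactly when $||C||\ge 17$, i.e.\ when $\deg a_1+\deg b_1-e\ge 2$. This one chain even absorbs the paper's first case $v_0u_2\in E$, which the paper treats separately with a separating double chain. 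But from that point on you have a plan, not a proof, and the plan stalls precisely where the lemma is hard. When $a_1$ and $b_1$ are both $3$-vertices (deficit exactly $1$), and symmetrically at $u_3$, no concrete $P$, $N$, tails or inequality is given; you write that you ``would incorporate'' further vertices ``either as chain tails \dots or into $N$'' and ``would conclude with Lemma~\ref{dchain} or \dots Lemma~\ref{sepdchain}'', conceding that ``the main obstacle is precisely this bookkeeping''. That bookkeeping is the lemma: the paper's proof devotes nearly all of its length to these configurations, splitting on whether the off-face neighbours of $u_1$ (your $a_1,b_1$, its $w_0,w_1$) are adjacent to $u_3$, descending two further levels to the neighbours $x_0,x_1$ of $w_0$ and to a vertex $y$ beyond $x_0$, and exhibiting, among others, the double chain $(\{u_0,u_1,u_3\},\{u_2,v,w_0\},u_1,u_3,w_1,w_2)$, the chain $(\{u_0,u_1,w_0,x_0\},\{u_2,v,w_1,x_1,y\},u_0,u_3)$, and two applications of Observation~\ref{abg} deleting $7$ and $8$ vertices. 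None of these constructions can be read off from your sketch.

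Moreover, the natural realisations of your sketch actually fail, which shows the gap is not merely one of exposition. In the all-$3$-vertex configuration, dangling $a_1$ itself as a chain tail (so $P=\{u_0,u_1\}$, $N=\{u_2,u_3,v_0,b_1\}$, tail $a_1$) again gives $||C||=16$ --- short by the same single edge --- and the two-tail alternative is not available: with $P=\{u_0,u_1\}$ both candidate tails $u_2$ and $a_1$ attach to $P$ at $u_1$, while the paper's definition of a double chain requires coinciding attachment vertices to form a $0$-vertex of $G[P]$, and $u_1$ is a $1$-vertex of $G[P]$. So every one-step extension of your chain misses the bound by one; overcoming exactly this obstruction is what forces the paper to dig deeper into the graph, and the corresponding cases of Lemma~\ref{4443} remain genuinely unproved in your proposal.
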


\begin{proof}
By contradiction, suppose there is a $4$-face $u_0u_1u_2u_3$, such that $u_0$ is a $3$-vertex and the other $u_i$ are $4$-vertices. Let $v$ be the third neighbour of $u_0$. Note that $v$ is a $4$-vertex by Lemma~\ref{3-3}.

Suppose first that $vu_2 \in E$. By planarity of $G$, $\{u_0,v,u_2\}$ separates the vertices $u_1$ and $u_3$. Therefore $(\{u_0\}, \{v,u_2\}, u_0, u_0, u_1, u_3)$ is a separating double chain of $G$. By Lemma~\ref{sepdchain}, $1 < \frac{9}{11}3 - \frac{2}{11}(9 - 1)$, a contradiction.

Now $vu_2 \notin E$. Let $w_0$ and $w_1$ be the neighbours of $u_1$ distinct from $u_0$ and $u_2$. 

\begin{itemize}
\item Suppose $w_0$ and $w_1$ are adjacent to $u_3$.  Let $H^* = G - \{u_0,u_1,u_2,u_3,v,w_0,w_1\}$. By Lemma~\ref{degge3}, the $w_i$'s are $3^+$-vertices, and since $G$ is triangle-free, they cannot be adjacent to $u_2$. Moreover, by planarity, at least one of the $w_i$'s is not adjacent to $v$. This implies that $H^*$ has at most $m-15$ edges. Graph $H^*$ has $n-7$ vertices. Adding vertices $u_0$, $u_1$, and $u_3$ to any induced linear forest of $H^*$ leads to an induced linear forest of $G$. By Observation~\ref{abg} applied to $(\alpha, \beta, \gamma) = (7,15,3)$, $3 < \frac{9}{11}7 - \frac{2}{11}15$, a contradiction.

\item Suppose that one of the $w_i$'s, say $w_0$, is adjacent to $u_3$ and that the other one ($w_1$) is not adjacent to $u_3$.
Let $w_2$ be the neighbour of $u_3$ distinct from $u_0$, $u_2$, and $w_0$. 

\begin{itemize}
\item Suppose $w_1$ or $w_2$, say $w_1$, is a $3$-vertex in $G - \{u_0,u_1,u_2,u_3,v,w_0\}$. Suppose $w_2$ is a $3$-vertex in $G - \{u_0,u_1,u_2,u_3,v,w_0,w_1\}$ (note that this implies that $w_1w_2 \notin E$). Let $H^* = G - \{u_0,u_1,u_2,u_3,v,w_0,w_1,w_2\}$. Graph $H^*$ has $n-8$ vertices and at most $m-20$ edges. Adding vertices $u_0$, $u_1$, and $u_3$ to any induced linear forest of $H^*$ leads to an induced linear forest of $G$. By Observation~\ref{abg} applied to $(\alpha, \beta, \gamma) = (8,20,3)$, $3 < \frac{9}{11}8 - \frac{2}{11}20$, a contradiction. 

Now $w_2$ is a $2^-$-vertex in $G - \{u_0,u_1,u_2,u_3,v,w_0,w_1\}$, and thus $(\{u_0,u_1,u_3\},$ $\{u_2,v,w_0,w_1\},$ $u_3,$ $w_2)$ is a chain of $G$. By Lemma~\ref{schain}, $3 < \frac{9}{11}7 - \frac{2}{11}(17 - \frac{1}{2})$, a contradiction. 

\item Suppose $w_1$ and $w_2$ are $2^-$-vertex in $G - \{u_0,u_1,u_2,u_3,v,w_0\}$. We have a double chain $(\{u_0,u_1,u_3\},\{u_2,v,w_0\},u_1,u_3,w_1,w_2)$. By Lemma~\ref{dchain}, $3 < \frac{9}{11}6 - \frac{2}{11}(14 - 3)$, a contradiction.
\end{itemize}

\item Suppose the $w_i$'s are not adjacent to $u_3$. Let us prove by contradiction that the $w_i$'s are $2^-$-vertices in $G - \{u_0,u_1,u_2,v\}$.

 \begin{itemize}
\item Suppose the $w_i$'s are $3$-vertices in $G - \{u_0,u_1,u_2,v\}$.  Then we have the following chain: $(\{u_0,u_1\},\{u_2,v,w_0,w_1\},u_0,u_3)$. By Lemma~\ref{schain}, $2 < \frac{9}{11}6 - \frac{2}{11}(18 - \frac{1}{2})$, a contradiction. 

\item Now one of the $w_i$'s, say $w_0$, is a $2^-$-vertex in $G - \{u_0,u_1,u_2,v\}$. Suppose $w_1$ is a $3$-vertex in $G - \{u_0,u_1,u_2,v\}$. Then we have the following double chain: $(\{u_0,u_1\},\{u_2,v,w_1\},u_0,u_1,u_3,w_0)$. By Lemma~\ref{dchain}, $2 < \frac{9}{11}5 - \frac{2}{11}(15 - 3)$, a contradiction.
\end{itemize}
Now the $w_i$'s are $2^-$-vertices in $G - \{u_0,u_1,u_2,v\}$. 
The $w_i$'s are $3$-vertices or $4$-vertices in $G$, they are not adjacent to $u_0$ and $u_2$ since $G$ is triangle-free, and by Lemma~\ref{4433}, if for some $i$, $w_i$ is adjacent to $v$, then $w_i$ is a $4$-vertex. Therefore each of the $w_i$'s is either a $3$-vertex non-adjacent to $v$ or a $4$-vertex adjacent to $v$, and thus the $w_i$'s are $2$-vertices in $G - \{u_0,u_1,u_2,v\}$.

Let $x_0$ and $x_1$ be the two neighbours of $w_0$ in $G - \{u_0,u_1,u_2,u_3,v\}$. Let $d$ be the sum of the degrees of $x_0$ and $x_1$ in $G - \{u_0,u_1,u_2,v,w_0,w_1\}$. 

\begin{itemize}
\item Suppose $d \ge 4$. We have a simple chain $(\{u_0,u_1,w_0\},\{u_2,v,w_1,x_0,x_1\},u_0,u_3)$. By Lemma~\ref{schain}, $3 < \frac{9}{11}8 - \frac{2}{11}(20 - \frac{1}{2})$, a contradiction. 

\item Suppose $d \le 3$. 
Suppose one of the $x_i$, say $x_0$, is a $0$-vertex in $G - \{u_0,u_1,u_2,v,w_0,w_1\}$. We have a simple chain $(\{u_0,u_1,w_0,x_0\},\{u_2,v,w_1,x_1\},u_0,u_3)$. By Lemma~\ref{schain}, $4 < \frac{9}{11}8 - \frac{2}{11}(16 - \frac{1}{2})$, a contradiction. 

Suppose one of the $x_i$, say $x_0$, is a $1$-vertex in $G - \{u_0,u_1,u_2,v,w_0,w_1\}$ and the other one ($x_1$) is a $1$-vertex or a $2$-vertex in $G - \{u_0,u_1,u_2,v,w_0,w_1\}$. 

Let us prove by contradiction that $x_0$ is not adjacent to $u_3$. Suppose $x_0$ is adjacent to $u_3$.

Suppose $x_0w_1 \in E$. By Lemma~\ref{4433}, at least one of the $w_i$'s, $w_0$ say, is a $4$-vertex, and thus is adjacent to $v$. By planarity, $w_1$ is not adjacent to $v$, and thus $w_1$ is a $3$-vertex in $G$. In this case, $x_0$ is adjacent exactly to $w_0$, $w_1$ and $u_3$ since it is a $1$-vertex in $G - \{u_0,u_1,u_2,v,w_0,w_1\}$ and $G$ is triangle-free. Then $x_0$ and $w_1$ are adjacent $3$-vertices in $G$, which contradicts Lemma~\ref{3-3}.

Now $x_0w_1 \notin E$. We have a simple chain $(\{u_0,u_1,x_0\},\{u_2,u_3,v,w_0\},u_1,w_1)$. By Lemma~\ref{schain}, $3 < \frac{9}{11}7 - \frac{2}{11}(16 - \frac{1}{2})$, a contradiction. 

Therefore we know that $x_0$ is not adjacent to $u_3$. Let $y$ be the neighbour of $x_0$ in $G - \{u_0,u_1,u_2,v,w_0,w_1\}$. Suppose $y$ is a $3$-vertex in $G - \{u_0,u_1,u_2,v,w_0,w_1,x_0,x_1\}$. Now the following quadruplet is a simple chain: $(\{u_0,u_1,w_0,x_0\},\{u_2,v,w_1,x_1,y\},u_0,u_3)$. By Lemma~\ref{schain}, $4 < \frac{9}{11}9 - \frac{2}{11}(21 - \frac{1}{2})$, a contradiction. Now $y$ is a $2^-$-vertex in $G - \{u_0,u_1,u_2,v,w_0,w_1,x_0,x_1\}$. Then $(\{u_0,u_1,w_0,x_0\},\{u_2,v,w_1,x_1\},$ $u_0,x_0,u_3,y)$ is a double chain. By Lemma~\ref{dchain}, $4 < \frac{9}{11}8 - \frac{2}{11}(18 - 3)$, a contradiction.
\end{itemize}
\end{itemize}
\vspace{-0.8cm}
\end{proof}

For every face $f$ of $G$, let $l(f)$ denote the length of $f$, and let $c_{4}(f)$ denote the number of $4$-vertices in $f$. For every vertex $v$, let $d(v)$ be the degree of $v$. Let $k$ be the number of faces of $G$, and for every $3 \le d \le 4$ and every $4 \le l$, let $k_l$ be the number of $l$-faces and $n_d$ the number of $d$-vertices in $G$.

Each $4$-vertex is in the boundary of at most four faces. Therefore the sum of the $c_{4}(f)$ over all the $4$-faces and $5$-faces is $\sum_{f,4 \le l(f) \le 5} c_{4}(f) \le 4n_4$.
Now, by Lemmas~\ref{degge3}, \ref{4433}, and \ref{4443}, every $4$-face of $G$ has only $4$-vertices in its boundary, so for each $4$-face $f$, $c_{4}(f) = 4$. By Lemma~\ref{3-3}, every $5$-face of $G$ has at least three $4$-vertices, so for each $5$-face $f$ we have $c_{4}(f) \ge 3 \ge 2$.
Thus $\sum_{f,l(f) = 4} c_{4}(f) + \sum_{f,l(f) = 5} c_{4}(f) \ge 4k_4 + 2k_5$. Thus $4n_4\ge 4k_4 + 2k_5$, and thus $2n_4 \ge 2k_4 + k_5$.
By Euler's formula, we have:

\begin{eqnarray*} 
-12 & = & 6m - 6n - 6k  \\
& = & 2\sum_{v \in V}d(v) + \sum_{f \in F(G)}l(f) - 6n - 6k  \\
& = & \sum_{d \ge 3}(2d - 6)n_d + \sum_{l \ge 4}(l-6)k_l  \\
& \ge & 2n_4 - 2k_4 - k_5  \\
& \ge & 0 \\
\end{eqnarray*}
That contradiction ends the proof of Theorem~\ref{main}.

\bibliographystyle{plain}
\bibliography{biblio} {}

\end{document}